\newcommand{\mynote}[3][]{\todo[caption={\sf #3}, color={%
    \ifnum#2=0 green!20
    \else\ifnum#2=1 orange!30
    \else\ifnum#2=2 yellow!20
    \else\ifnum#2=3 cyan!20
    \else magenta!20\fi\fi\fi\fi}, size=\tiny, #1]{\renewcommand{\baselinestretch}{1}\selectfont\sf#3}\xspace}
\let\leq\leqslant
\let\geq\geqslant
\newtheorem{myclaim}{Claim}
\newtheorem{myremark}{Remark}
\newtheorem{lemma}{Lemma}
\newtheorem*{lemma*}{Lemma}
\newtheorem{theorem}{Theorem}
\newtheorem{proposition}{Proposition}
\newtheorem{corollary}{Corollary}
\DeclareSymbolFont{bbold}{U}{bbold}{m}{n}
\DeclareSymbolFontAlphabet{\mathbbold}{bbold}
\newcommand\content[1]{\ensuremath{\contentmorphism(#1)}}
\newcommand\contentscc[2]{\ensuremath{\contentmorphism\_\mathsf{scc}(#1,#2)}}
\newcommand\arity[1]{\ensuremath{\textsf{arity}(#1)}}
\newcommand\contentmorphism{\ensuremath{\textsf{alph}}}
\newcommand\scc[2]{\ensuremath{\textsf{scc}(#1,#2)}}
\newcommand\PT{PT}
\newcommand\exactcontent[1]{\ensuremath{{#1}^\circledast}}
\newcommand\LangExp[4]{{#1}_0(\exactcontent{#2}_1)^{#3}{#1}_1\cdots {#1}_{{#4}-1}(\exactcontent{#2}_{#4})^{#3}{#1}_{#4}}
\newcommand\Lang[3]{{\ensuremath{\mathcal{L}}(\vec {#1},\vec {#2},#3)}}
\def\pv#1{\ensuremath{{\sf#1}}}
\def\Om#1#2{\ensuremath{\overline\Omega_{#1}{\sf#2}}}
\def\om#1#2{\ensuremath{\Omega_{#1}{\sf#2}}}
\def\tcl#1{\ensuremath{\mathrm{cl}(#1)}}
\title[The separation problem for regular languages]{The separation problem for regular languages by piecewise testable languages}
\author[L. van Rooijen]{L. van Rooijen} 
\email{lvanrooi@labri.fr}
\author[M. Zeitoun]{M. Zeitoun}
\email{mz@labri.fr}
\address{LaBRI, Universit\'es de Bordeaux \& CNRS
  UMR~5800. 351 cours de la Lib\'eration, 33405 Talence Cedex, France.
}
\subjclass[2000]{Primary 68Q45,68Q70; Secondary 20M35}
\begin{document}

\begin{abstract}
Separation is a classical problem in mathematics and computer science. It asks whether, given two sets belonging to some class, it is possible to separate them by another set of a smaller class. We present and discuss the separation problem for regular languages. We then give a direct polynomial time algorithm to check whether two given regular languages are separable by a piecewise testable language, that is, whether a $\mathcal{B}\Sigma_1(<)$ sentence can witness that the languages are indeed disjoint. The proof is a reformulation and a refinement of an algebraic argument already given by Almeida and the second author.
\end{abstract}

\maketitle

\section{Introduction}
\label{sec:introduction}

\subsection*{The separation problem} Separation is a classical question in mathematics and computer science. In general, one says that two sets $X,Y$ are \emph{separable} by a set $U$ if $X\subseteq U$ and $Y\cap U=\varnothing$. In this case, $U$ is called a \emph{separator}.

The separation problem is the following. Consider a class $\mathcal{C}$ of sets or structures, and a subclass $\mathcal{C}_0$ of $\mathcal{C}$. The problem asks whether two elements $X,Y$ of $\mathcal{C}$ can always be separated by an element of the subclass $\mathcal{C}_0$. A classical example of such a separation problem, with a positive answer, is the Hahn-Banach separation theorem. Another example that appeared recently in computer science is the 
proof of Leroux~\cite{LEROUX-TURING100} of the decidability of the reachability problem for vector addition systems (or Petri Nets), which greatly simplifies the original proof by Mayr~\cite{DBLP:journals/siamcomp/Mayr84}, and that of Kosaraju~\cite{Kosaraju:1982:DRV:800070.802201}.  Namely, Leroux has shown that non-reachability can be witnessed by a class of recursively enumerable separators: from a configuration $c_1$ of such a system, one cannot reach a configuration $c_2$ if and only if the sets $\{c_1\}$ and $\{c_2\}$ can be separated by a Presburger definable set, which in addition is invariant under actions of the vector addition system. Since such sets form a recursively enumerable class, this yields a semi-algorithm for checking non-reachability.

In the case where elements of $\mathcal{C}$ cannot always be separated
by an element of $\mathcal{C}_0$, several natural questions arise:
\begin{enumerate}
\item\label{item:1} given elements $X,Y$ in $\mathcal{C}$, can we decide whether a separator exists in~$\mathcal{C}_0$?
\item if so, what is the complexity of this decision problem?
\item can we in addition compute a separator, and what is the complexity?
\end{enumerate}

In this context, it is known for example that separation of two  context-free languages by a regular one is undecidable~\cite{Hunt:1982:DGP:322307.322317}.

In this paper, we look at the separation problem for the class $\mathcal{C}$ of regular languages, and we are looking for separators in smaller classes, such as prefix- or suffix-testable languages, locally trivial languages, and piecewise testable languages (we will define these classes below).

\subsection*{The profinite approach} Several results from the literature can be combined into an algorithm answering question~\ref{item:1}, for all classes we are interested in. Several partial complexity results can also be derived from this approach, which we briefly explain now. This approach relies on a generic connection found by Almeida~\cite{MR1709911} between profinite semigroup theory and the separation problem, when the separators are required to belong to a given variety of regular languages.

\smallskip
A variety $\mathcal{V}$ of regular languages associates to each finite alphabet $A$ a class of languages $A^*\mathcal{V}$, with some closure 
properties (namely closure under Boolean operations, left and right residuals $L\mapsto a^{-1}L$ and $L\mapsto La^{-1}$, and inverse morphisms between free monoids). All classes of separators in this paper belong to a variety of regular languages.

Almeida~\cite{MR1709911} has shown that two regular languages over $A$ are separable by a language of $A^*\mathcal{V}$ if and only if the topological closures of these two languages inside a profinite semigroup, depending only on $\mathcal{V}$, intersect. To turn this property into an algorithm, we have therefore to be able:
\begin{itemize}
\item to compute representations of these topological closures, and
\item to test for emptiness of intersections of such closures.
\end{itemize}
So far, these problems have no generic answer. They have been studied for a small number of specific varieties, in an algebraic context. Deciding whether the closures of two regular languages intersect is equivalent to computing the so-called 2-pointlike sets of a finite semigroup wrt.~the variety we are interested in, see~\cite{MR1709911}. This question has been answered positively, in particular for the following varieties:
\begin{enumerate}[label=$\roman*)$]
\item languages recognized by a finite group~\cite{ash:1991:a,ribes&zalesskii:1993:a,Auinger&Steinberg:constructive-version-Ribes-Zalesskii-product:2005:a},
\item star-free (that is, FO-definable) languages~\cite{DBLP:journals/ijac/HenckellRS10a,Henckell:1988},
\item\label{item:4} piecewise testable (that is, $\mathcal{B}\Sigma_1(<)$-definable) languages~\cite{MR1611659,MR2365328},
\item languages whose syntactic semigroups are $\mathcal{R}$-trivial, that is, languages whose minimal automaton is very weak (the only cycles allowed in the graph of the automaton are self-loops)~\cite{MR2365328},
\item\label{item:5} languages for which membership can be tested by inspecting prefixes and suffixes up to some length (folklore, see~\cite[Sec.~3.7]{JAbook}),
\item locally testable languages, that is, languages for which membership can be tested by inspecting prefix, suffix and factors up to some length~\cite{MR1851812,Nogueira}.
\end{enumerate}

For all these classes, proofs use algebraic or topological arguments. In this paper, we obtain direct polynomial time algorithms for Cases~\ref{item:4} and \ref{item:5}. Our intuition is strongly lead by the proof techniques from profinite semigroup theory. 

\medskip
A general issue is that the topological closures cannot be described with a finite device. However, for 
piecewise testable languages, the approach of~\cite{MR1611659} consists in computing an automaton over an extended alphabet, which recognizes the closure of the original language. This can be performed in polynomial time wrt.\ the size of the original automaton. Since these automata admit the usual construction for intersection, and can be checked for emptiness in NLOGSPACE, we get a polynomial time algorithm wrt.~the size of the original automata. The construction was presented for deterministic automata but also works for nondeterministic ones. One should mention that the extended alphabet is $2^A$ (where $A$ is the original alphabet).
Therefore, these results give an algorithm which, from two NFAs, decides separability by piecewise testable languages in time polynomial in the number of states of the NFAs and exponential in the size of the original alphabet.


The improvement of the separation result for piecewise testable languages as presented in this paper is twofold: on the one hand, the algorithm presented provides better complexity as it runs in polynomial time in both the size of the automata, \emph{and} in the size of the alphabet. On the other hand, our results do not make use of the theory of profinite semigroups, that is, we work only with elementary concepts. The proof follows however basically the  same pattern as the original one.

The key argument is to show that non-separability is witnessed by both automata admitting a path of the same shape. 
In our proof, we manually extract from two non-separable automata some paths with this property, using Simon's factorization forest Theorem~\cite{Simon199065}. Whereas in the profinite world, these witnesses are  immediately obtained by a standard compactness argument.

\subsection*{Organization of the paper} After having recalled the background in Section~\ref{sec:prelim}, we present in Section~\ref{sec:separation-prefixes} a simple toy example, to highlight the main definitions and techniques: the case of separation by prefix-testable languages. Section~\ref{sec:separation} is devoted to the question of separation by piecewise testable languages. The main algorithm and proofs are given in this section. For the interested reader, we provide some elements of profinite semigroup theory in appendix.


\section{Preliminaries}
\label{sec:prelim}
Given a finite alphabet $A$, we denote by $A^*$ (resp.~by $A^+$) the free monoid (resp.~the free semigroup) over $A$. For a word $u \in A^*$, the smallest $B\subseteq A$ such that $u\in B^*$ is called the \emph{alphabet} of $u$ and is denoted by $\content u$. A \emph{nondeterministic finite automaton} (NFA) over $A$ is denoted by a tuple $\mathcal{A}=(Q,A,I,F,\delta)$, where $Q$ is the set of states, $I\subseteq Q$ the set of initial states, $F\subseteq Q$ the set of final states and $\delta\subseteq Q\times Q$ the transition relation. If $\delta$ is a function, then $\mathcal{A}$ is a deterministic automaton (DFA). We denote by $L(\mathcal{A})$ the language of words accepted by $\mathcal{A}$.  Given a word $u\in A^*$, a subset $B$ of $A$ and two states $p,q$ of $\mathcal{A}$, we denote 
\begin{itemize}
\item by $p\xrightarrow{\ u\ }q$ a path from state $p$ to state $q$ labeled $u$. 
\item by $p\xrightarrow{{}\subseteq B} q$ a path from $p$ to $q$ of which all   transitions are labeled by letters of~$B$.
\item by $p\xrightarrow{{}=B}q$ a path from $p$ to $q$ of which all transitions are labeled by letters of   $B$, with the additional demand that every letter of $B$ occurs at   least once along this path.
\end{itemize}
Given a state $p$, we denote by $\scc{p}{\mathcal{A}}$ the strongly connected component of $p$ in $\mathcal{A}$ (that is, the set of states reachable from $p$), and by $\contentscc{p}{\mathcal{A}}$ the set of labels of all transitions occurring in this strongly connected component. Finally, we define the restriction of $\mathcal{A}$ to a subalphabet  $B\subseteq A$ by  $\mathcal{A}\restriction_B\mathrel{\;\stackrel{\text{def}}=}(Q,A,I,F,\delta\cap (Q\times B\times Q))$.

\section{A toy example: separation by prefix-testable languages}
\label{sec:separation-prefixes}
A regular language $L$ is a \emph{prefix-testable language} if membership of $L$ can be tested by inspecting prefixes up to some length, that is, if $L$ is a finite Boolean combination of languages of the form $uA^*$, for a finite word $u$. Prefix-testable languages form a variety of regular languages. Therefore, as recalled in the introduction, it follows by~\cite{MR1709911} that testing whether two given languages can be separated by a prefix-testable language amounts to checking that their topological closures in some profinite semigroup have a nonempty intersection.

\smallskip
 It turns out that for prefix-testable languages, this profinite semigroup is easy to describe (see~\cite[Sec.~3.7]{JAbook}): it is $A^+\cup A^\infty$, where $A^\infty$ denotes the set of right infinite words over~$A$. Multiplication in this semigroup is defined as follows: infinite words are left zeros ($vw=v$ if $v\in A^\infty$), and  multiplication on the left by a finite word is the usual multiplication: $(a_1\cdots a_n)(b_1\cdots)=a_1\cdots a_nb_1\cdots$. Finally, the topology is the product topology: a sequence converges
 \begin{itemize}
 \item to a finite word $u$ if it is ultimately equal to $u$,
 \item to an infinite word $v$ if for every finite prefix $x$ of $v$, the sequence ultimately belongs to $x(A^+\cup A^\infty)$.
 \end{itemize}

Therefore, from a given NFA $\mathcal{A}$, one can compute a Büchi automaton recognizing the language of infinite words that belong to the closure of~$L(\mathcal{A})$, as follows:
\begin{enumerate}
\item Trim $\mathcal{A}$, by removing all states from which one cannot reach a final state. This can be performed in linear time wrt.~the size of $\mathcal{A}$, and does not change the language recognized by $\mathcal{A}$.
\item Build the Büchi automaton obtained from the resulting trim automaton by declaring all states accepting. 
\end{enumerate}

This yields a straightforward PTIME (actually NLOGSPACE) algorithm to decide separability by a prefix-testable language: first check that $L(\mathcal{A}_1)\cap L(\mathcal{A}_2)=\varnothing$. If so, compute the intersection of the languages of infinite words belonging to the closures of $L(\mathcal{A}_1)$ and $L(\mathcal{A}_2)$ by the usual product construction, and check that this Büchi automaton accepts at least one word. 

\begin{proposition}
  \label{prefix-testable}
  One can decide in PTIME whether two languages can be separated by a prefix-testable language.\qed
\end{proposition}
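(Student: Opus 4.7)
The plan is to follow the roadmap already laid out in the preceding discussion and to verify that each step is effective and polynomial. By Almeida's theorem, separability by a prefix-testable language is equivalent to the non-intersection of the topological closures of $L(\mathcal{A}_1)$ and $L(\mathcal{A}_2)$ inside $A^+\cup A^\infty$. Since a finite word $u$ has no approximating sequence in $A^+\cup A^\infty$ other than sequences ultimately equal to $u$, the closure of a language $L$ splits as $L \cup \mathrm{Cl}_\infty(L)$, where $\mathrm{Cl}_\infty(L)\subseteq A^\infty$ gathers the infinite-word limits of sequences in $L$. So the intersection of closures is empty iff $L(\mathcal{A}_1)\cap L(\mathcal{A}_2)=\varnothing$ \emph{and} $\mathrm{Cl}_\infty(L(\mathcal{A}_1))\cap\mathrm{Cl}_\infty(L(\mathcal{A}_2))=\varnothing$.

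The first conjunct is standard: build the product NFA and test emptiness in NLOGSPACE. For the second conjunct, I would argue that $\mathrm{Cl}_\infty(L(\mathcal{A}))$ is exactly the $\omega$-language recognized by the B\"uchi automaton $\mathcal{B}(\mathcal{A})$ obtained by trimming $\mathcal{A}$ (removing states that cannot co-reach a final state) and then declaring every remaining state accepting. Indeed, an infinite word $v\in A^\infty$ lies in the closure iff for every finite prefix $x$ of $v$, some word of $L(\mathcal{A})$ has $x$ as a prefix, i.e.\ in the trimmed automaton some state is reachable from an initial state by reading $x$; by K\"onig's lemma this is equivalent to the existence of an infinite run of the trimmed automaton on $v$, which in turn is equivalent to acceptance by $\mathcal{B}(\mathcal{A})$ (since all states are accepting). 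This construction is clearly linear in the size of $\mathcal{A}$.

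It remains to observe that one can then take the standard synchronized product of $\mathcal{B}(\mathcal{A}_1)$ and $\mathcal{B}(\mathcal{A}_2)$, which yields a B\"uchi automaton of size $|\mathcal{A}_1|\cdot|\mathcal{A}_2|$ recognizing the intersection of their $\omega$-languages. Emptiness of a B\"uchi automaton reduces to detecting a reachable non-trivial strongly connected component containing an accepting state; here every state is accepting, so one only needs to check for a reachable cycle, which is an NLOGSPACE (hence PTIME) task. Combining these steps gives the claimed polynomial-time (in fact NLOGSPACE) algorithm.

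The only delicate point is the identification of $\mathrm{Cl}_\infty(L(\mathcal{A}))$ with the $\omega$-language of $\mathcal{B}(\mathcal{A})$; the convergence condition on infinite words forces us to pass through a compactness argument (K\"onig's lemma) to turn ``arbitrarily long prefixes are realizable in $\mathcal{A}$'' into ``there is an infinite accepting run''. Everything else is a routine assembly of well-known constructions.
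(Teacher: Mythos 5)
Your proof is correct and follows essentially the same route as the paper: check emptiness of $L(\mathcal{A}_1)\cap L(\mathcal{A}_2)$, then build the B\"uchi automata by trimming and making all states accepting, take the product, and test for a reachable cycle. The only difference is that you spell out the K\"onig's-lemma justification for why the trimmed all-accepting B\"uchi automaton captures exactly the infinite-word limits, a point the paper leaves implicit.
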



\section{A simple PTIME algorithm for separation by a piecewise testable language}
\label{sec:separation}

\subsection*{Piecewise testable languages} Let $\lhd$ be the \emph{scattered subword ordering} defined on $A^*$ as follows: for $u,v\in A^*$, we have $u\lhd v$ if $u=a_1\cdots a_n$ and $v=v_0a_1v_1\cdots v_{n-1}a_nv_n$, with $a_i\in A$ and  $v_i\in A^*$.
We let $$\text{Sub}_n(u)=\{ w \in A^* : |w| \leq n, w \lhd u\}.$$
When two words have the same scattered subwords up to length $n$, we say that they are \emph{$\sim_n$-equivalent}:
\[
\text{Sub}_n(u) = \text{Sub}_n(v)\quad \Longleftrightarrow\quad u \sim_n v.
\] 

A regular language over an alphabet $A$ is \emph{piecewise testable (\PT)} \cite{Simon:1975:PTE:646589.697341} if it is a finite Boolean combination of languages of the form $A^* a_1 A^* a_2 \ldots A^* a_n A^*$, where every $a_i \in A$.  Whether a given word belongs to a \PT-language is thus determined by the set of its scattered subwords up to a certain length. In other words, a regular language $L$ is piecewise testable if and only if there exists an $n \in \mathbb{N}$ such that $L$ is a union of $\sim_n$-classes. 

\smallskip
The class of piecewise testable languages has been extensively studied during the last decades. It corresponds to languages that can be defined in the fragment $\mathcal{B}\Sigma_1(<)$ of first-order logic on finite words. Simon has shown that piecewise testable languages are exactly those languages whose syntactic monoid is $\mathcal{J}$-trivial~\cite{Simon:1975:PTE:646589.697341}, and this property yields a decision procedure to check  whether a language is piecewise testable. Stern has refined this procedure into a polynomial time algorithm~\cite{Stern:Complexity-some-problems-from:1985:a}, whose complexity has been improved by Trahtman~\cite{Trahtman:Piecewise-Local-Threshold-Testability:2001:b}.

\subsection*{Separation by a piecewise testable language} We say that two regular languages $L_1, L_2$ are \emph{\PT-separable} if there exists a piecewise testable language $L$ that separates them, \emph{i.e.},
\begin{equation*}
L_1 \subseteq L \text{ and } L_2 \cap L = \varnothing.
\end{equation*}
 In other words, $L_1$ and $L_2$ are \PT-separable if there exists a $\mathcal{B}\Sigma_1(<)$ formula which is satisfied by all words of $L_1$, and not satisfied by any  word of $L_2$.

Our main contribution is a simple proof of the following result, which  states that one can decide in polynomial time whether two languages are \PT-separable.

\begin{theorem}
  \label{thm:PT-separation}
  Given two NFAs, one can determine in polynomial time, with respect to the number of states and the size of the alphabet, whether the languages recognized by these NFAs are \PT-separable.
\end{theorem}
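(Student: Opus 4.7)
The plan is to characterize $\PT$-non-separability by the existence in both automata of accepting paths following a common ``skeleton,'' and then to decide this condition by a polynomial fixed-point computation. A \emph{skeleton} is a finite sequence $B_0, a_1, B_1, \ldots, a_k, B_k$ with $B_j\subseteq A$ and $a_j\in A$; an accepting path of $\mathcal{A}_i$ \emph{realizes} such a skeleton if it decomposes from an initial to a final state as
\[q_{i,0}\xrightarrow{{}=B_0}q_{i,0}'\xrightarrow{a_1}q_{i,1}\xrightarrow{{}=B_1}q_{i,1}'\xrightarrow{a_2}\cdots\xrightarrow{a_k}q_{i,k}\xrightarrow{{}=B_k}q_{i,k}',\]
where each $=B_j$-block stays inside a single strongly connected component of $\mathcal{A}_i\restriction_{B_j}$. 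The first claim is that $L_1,L_2$ are not $\PT$-separable if and only if some skeleton is realized by an accepting path in both $\mathcal{A}_1$ and $\mathcal{A}_2$.

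The easy direction is pumping: by taking the $=B_j$-blocks sufficiently long, one produces accepting runs in $\mathcal{A}_1$ and $\mathcal{A}_2$ whose labels share all scattered subwords up to any prescribed length, so no $\sim_n$-class can separate $L_1$ from $L_2$. For the hard direction, I would use Simon's factorization forest theorem: from pairs $u\in L_1, v\in L_2$ with $u\sim_n v$ for unboundedly large $n$, pick accepting runs in $\mathcal{A}_1,\mathcal{A}_2$, apply Simon's theorem to obtain Ramseyan factorizations of bounded depth, and show that for $n$ large enough a pigeonhole argument extracts a matching block structure from which a common skeleton can be read off.

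Algorithmically, I would saturate a relation $R\subseteq Q_1\times Q_2$ of pairs of states reachable along congruent prefixes of a common skeleton, starting from $R=I_1\times I_2$. The iteration alternates two rules: extend $(p_1,p_2)\in R$ by a shared letter $a$ available from both $p_1$ and $p_2$; and close $R$ under ``congruent $=B$-loops,'' namely pairs $(p_1,p_2)\to(q_1,q_2)$ for which both automata admit $=B$-paths inside a single SCC of $\mathcal{A}_i\restriction_B$. Separation fails iff some pair in $F_1\times F_2$ eventually lands in $R$.

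The principal obstacle is to avoid the $2^{|A|}$ blow-up that a naive enumeration of subalphabets would incur, as in the profinite-based algorithm of~\cite{MR1611659}. The polynomial dependence on $|A|$ must come from two observations: only subalphabets of the form $\contentscc{p}{\mathcal{A}_i\restriction_B}$ occur along a realized skeleton, and such $B$'s can be discovered incrementally by adding one letter at a time to already-certified alphabets. Organizing the congruent-loop computation as a nested fixed-point over growing subalphabets, and proving its soundness together with the factorization-forest completeness argument, is where the main technical effort will lie.
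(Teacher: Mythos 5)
Your overall route is the same as the paper's: characterize non-\PT-separability by the existence of a common factorization pattern (your ``skeleton'' is the paper's $(\vec u,\vec B)$-path up to cosmetic differences), prove the easy direction by pumping the $=B$-blocks so that the two labels become $\sim_n$-equivalent for all $n$, prove the hard direction via Simon's factorization forest theorem, and decide the condition by a product-style saturation over pairs of states, which is exactly the paper's construction of the augmented automata $\tilde{\mathcal{A}_1},\tilde{\mathcal{A}_2}$ with a fresh letter per realizable loop tuple. All of that is sound and matches the paper.

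The one place where you deviate is precisely the step you flag as the crux, and there your plan as stated would fail. You propose to discover the common loop alphabets $B$ bottom-up, ``adding one letter at a time to already-certified alphabets.'' But the family of alphabets $B$ admitting $=B$-loops around a fixed pair $(q_1,q_2)$ need not contain any certifiable proper subalphabet from which to grow: take $A=\{a,b\}$ and an automaton where the only cycle through $q_1$ is $q_1\xrightarrow{a}s\xrightarrow{b}q_1$; then $\{a,b\}$ is good but neither $\{a\}$ nor $\{b\}$ is, so an increasing iteration never gets off the ground. The correct observation is dual: the good alphabets are closed under union (concatenate the two loops), so only the \emph{maximal} one matters, and it is reached from above as a greatest fixed point. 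The paper starts from $C_1=\contentscc{q_1}{\mathcal{A}_1}\cap\contentscc{q_2}{\mathcal{A}_2}$, restricts both automata to $C_1$, recomputes the SCC alphabets, and iterates; each step either stabilizes or drops at least one letter, giving at most $|A|$ iterations of a linear-time computation. Replacing your growing nested fixed point by this shrinking one closes the gap; the rest of your argument then goes through essentially as in the paper.
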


Note that a language is \PT-separable from its complement if and only if it is piecewise testable itself. Therefore, applying Theorem~\ref{thm:PT-separation} to a language and to its complement if they are both given by NFAs yields a polynomial time algorithm to check if a language is piecewise testable. We recover in particular the following result, proved by Stern~\cite{Stern:Complexity-some-problems-from:1985:a} using the characterization for minimal automata recognizing \PT-languages as given by Simon in~\cite{Simon:1975:PTE:646589.697341} (this result has later been improved by Trahtman~\cite{Trahtman:Piecewise-Local-Threshold-Testability:2001:b}). 
\begin{corollary}
  \label{cor:pt-decidability}
  One can decide in polynomial time whether a given DFA recognizes a  piecewise testable language.
\end{corollary}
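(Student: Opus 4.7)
The plan is to reduce \PT-membership directly to \PT-separability, as suggested in the remark preceding the corollary, and then apply Theorem~\ref{thm:PT-separation}. First I would establish the elementary observation that a regular language $L$ is piecewise testable if and only if $L$ and its complement $A^*\setminus L$ are \PT-separable. The forward direction is immediate, since $L$ itself is then a piecewise testable separator. For the converse, suppose some piecewise testable language $K$ separates $L$ from $A^*\setminus L$, so that $L\subseteq K$ and $K\cap(A^*\setminus L)=\varnothing$; the second condition gives $K\subseteq L$, whence $K=L$, and $L$ inherits being piecewise testable from~$K$.

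Next I would reduce the corollary to Theorem~\ref{thm:PT-separation}. Starting from a DFA $\mathcal{A}$ recognizing $L$, one produces a DFA $\mathcal{A}^c$ recognizing $A^*\setminus L$ in linear time simply by complementing the set of final states (this is the only place where determinism is used: the same construction on an NFA would not give a recognizer of the complement). Both automata have the same state set and the same alphabet, so each has size polynomial in the size of $\mathcal{A}$. Feeding $\mathcal{A}$ and $\mathcal{A}^c$ into the algorithm provided by Theorem~\ref{thm:PT-separation}, viewed as NFAs, tests \PT-separability of $L$ from $A^*\setminus L$ in polynomial time in the number of states and in the size of the alphabet. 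By the equivalence established above, this answers whether $L$ is piecewise testable.

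There is really no hard step here: the entire argument is a two-line reduction, and all the difficulty has been absorbed into the statement of Theorem~\ref{thm:PT-separation}. The only point that deserves explicit mention is the use of determinism to obtain the complement automaton in polynomial size.
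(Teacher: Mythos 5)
Your proposal is correct and follows exactly the paper's own route: the paper justifies the corollary via the remark that a language is \PT-separable from its complement if and only if it is piecewise testable, and then applies Theorem~\ref{thm:PT-separation} to the DFA and its complement. Your explicit verification of that equivalence and your note that determinism is what makes complementation polynomial are both accurate and just make the paper's terse argument fully precise.
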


The rest of this section is devoted to the proof of Theorem~\ref{thm:PT-separation}.  We fix a DFA $\mathcal{A}$ over~$A$. For $u_0,\ldots,u_p\in A^*$ and \underbar{nonempty} subalphabets $B_1,\ldots,B_p\subseteq A$, let $\vec u=(u_0,\ldots,u_p)$ and $\vec B=(B_1,\ldots,B_p)$.  We call such a pair $(\vec u, \vec B)$ a \emph{factorization pattern}.  A \emph{$(\vec u,\vec B)$-path} in $\mathcal{A}$ is a successful path (leading from the initial state to a final state of $\mathcal{A}$), of the form
\begin{figure}[H]
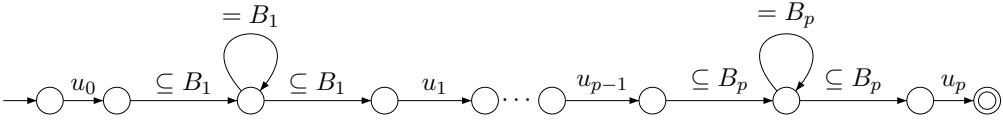

  \hspace*{-60ex}
  \vspace*{7ex}
  \centering
  \scalebox{.89}{%
    \begin{gpicture}
      \gasset{Nw=4,Nh=4}

      \put(0,-11){%
        \node[Nmarks=i](0)(-10,0){}
      \node(1)(0,0){}
      \node(2)(20,0){}
      \node(3)(40,0){}
      \node(4)(55,0){}
      \drawedge(0,1){$u_0$}
      \drawedge(1,2){$\subseteq B_1$}
      \drawedge(2,3){$\subseteq B_1$}
      \drawloop(2){$=B_1$}
      \drawedge(3,4){$u_1$}
      \node[Nframe=n](x)(60,0){$\cdots$}
      \put(80,0){\node(0)(-15,0){}
        \node(1)(0,0){}
        \node(2)(20,0){}
        \node(3)(40,0){}
        \node[Nmarks=r](4)(50,0){}
        \drawedge(0,1){$u_{p-1}$}
        \drawedge(1,2){$\subseteq B_p$}
        \drawedge(2,3){$\subseteq B_p$}
        \drawloop(2){$=B_p$}
        \drawedge(3,4){$u_p$}}
      }
    \end{gpicture}}
  \caption{A $(\vec u, \vec B)$-path}
  \label{fig:ub-path}
\end{figure}

Recall that edges denote sequences of transitions: an edge labeled ${}\subseteq B$ denotes a path of which all transitions are labeled by letters of $B$. An edge labeled ${}= B$ denotes a path of which all transitions are labeled by letters of $B$,  with the additional demand that every letter of $B$ occurs at least once.

\begin{myremark} \label{rem:sublang}
  The automaton $\mathcal{A}$ admits a $(\vec u,\vec B)$-path if and only if $L(\mathcal{A})$ contains a language of the form
  \begin{equation*}
    u_0(x_1y_1^*z_1)u_1\cdots u_{p-1}(x_py_p^*z_p)u_p,
  \end{equation*}
  where $\content{x_i}\cup\content{z_i}\subseteq \content{y_i}=B_i$.
\end{myremark}

Theorem~\ref{thm:PT-separation} directly follows from the next two statements.

\begin{proposition}
  \label{prop:J-sep-criterion}
  Let $\mathcal{A}_1$ and $\mathcal{A}_2$ be two NFAs. Then,
  $L(\mathcal{A}_1)$ and $L(\mathcal{A}_2)$ are \emph{not} \PT-separable if and only if
  there exist $\vec u=(u_0,\ldots,u_p)$ and $\vec B=(B_1,\ldots,B_p)$
  such that both $\mathcal{A}_1$ and $\mathcal{A}_2$ both have a $(\vec u,\vec
  B)$-path.
\end{proposition}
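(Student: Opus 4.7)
My plan is to handle the two directions separately: the easy ``if'' direction will be a routine pumping argument using congruence properties of $\sim_n$, while the ``only if'' direction carries the real content and will require Simon's factorization forest theorem to extract a common pattern from a sequence of pairs of increasingly $\sim_n$-equivalent accepting words.

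For the easy direction, starting from $(\vec u,\vec B)$-paths in both automata and using Remark~\ref{rem:sublang}, each $L(\mathcal{A}_j)$ contains all words of the form $u_0 x_1^{(j)}(y_1^{(j)})^{k}z_1^{(j)}u_1 \cdots u_{p-1} x_p^{(j)}(y_p^{(j)})^{k}z_p^{(j)}u_p$, with $\content{y_i^{(j)}}=B_i$ and $\content{x_i^{(j)}}\cup \content{z_i^{(j)}}\subseteq B_i$. Given any $n$, I would pick $k\geq n$; in each block the alphabet is exactly $B_i$ and every letter of $B_i$ occurs at least $n$ times, so by a standard property of scattered subwords the $i$-th block in $\mathcal{A}_1$ is $\sim_n$-equivalent to the $i$-th block in $\mathcal{A}_2$. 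Since $\sim_n$ is a congruence on $A^*$, the full words are $\sim_n$-equivalent. As \PT-languages are exactly unions of $\sim_n$-classes for some $n$, no \PT-language can contain one and avoid the other.

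For the hard direction, suppose $L(\mathcal{A}_1),L(\mathcal{A}_2)$ are not \PT-separable. Then for every $n$ there must exist $w^n_1\in L(\mathcal{A}_1)$ and $w^n_2\in L(\mathcal{A}_2)$ with $w^n_1\sim_n w^n_2$; otherwise the union of the finitely many $\sim_n$-classes meeting $L(\mathcal{A}_1)$ would be a \PT-separator. My aim is to extract, from this sequence of witness pairs, a single $(\vec u,\vec B)$ recognized by paths in both automata. I plan to apply Simon's factorization forest theorem to the morphism sending each word $w$ to its pair of transition relations in $\mathcal{A}_1$ and $\mathcal{A}_2$: this yields a constant $K$ such that every $w^n_j$ admits a Ramseyan factorization of depth at most $K$, where each internal idempotent node splits a word (whose joint transition action is an idempotent $e$) into many children that all act as $e$. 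Then I would induct on the depth to align the two factorizations, using $w^n_1\sim_n w^n_2$ for $n$ chosen much larger than $K$ and $|A|$: at each idempotent node of sufficient width one reads off a block $x_i y_i^* z_i$ in both runs with a common alphabet $B_i$, while the surrounding factors contribute interstitial words $u_i$ that can be chosen identical on both sides.

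The main obstacle, I expect, is precisely this alignment step. In the profinite proof recalled in the introduction, compactness immediately produces converging subsequences of $(w^n_1)$ and $(w^n_2)$ with a common limit, from which the shared $(\vec u,\vec B)$ drops out. Here I must do the extraction by hand: ensure that the factorizations given by Simon's theorem on both sides can be chosen to agree on \emph{where} the pumpable blocks sit and on \emph{which} alphabets $B_i$ appear in them. The argument should exploit that the alphabet of a pumpable factor is already visible through scattered subwords of length at most $|A|$, so $\sim_n$-equivalence with $n \geq |A|$ transports the required alphabet information between the two runs.
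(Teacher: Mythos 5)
Your ``if'' direction is sound and coincides with the paper's Lemma~\ref{lem:common-ub-path-1}: pump each block to exponent $n$, observe that two blocks over the same alphabet $B_i$ repeated $n$ times have exactly $B_i^{\leq n}$ as set of subwords of length $\leq n$, and conclude via the congruence property of $\sim_n$. No issue there.

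The ``only if'' direction, however, has a genuine gap, and it sits exactly where you yourself place ``the main obstacle'': the alignment step is asserted, not proved. Two separate problems arise. First, applying Simon's theorem to a single pair $(w^n_1,w^n_2)$ for one large $n$ gives you two unrelated Ramseyan trees for two \emph{different} words; nothing forces their idempotent blocks to sit in corresponding positions, to carry the same alphabets, or --- most delicately --- to leave behind literally \emph{identical} interstitial words $u_i$ (which the definition of a $(\vec u,\vec B)$-path requires, since the $u_i$ must label actual paths in both automata). Knowing $w^n_1\sim_n w^n_2$ only gives common scattered subwords, and without a normalization of the factorization the block decomposition is not even unique: a letter of $u_i$ with $\content{u_i}\cap B_i\neq\varnothing$ can be absorbed into an adjacent block, and adjacent blocks with $B_{i-1}\subseteq B_i$ and $u_i=\varepsilon$ can be merged, so ``the'' pattern read off a tree is ill-defined. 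Your closing remark that alphabets are visible through subwords of length $|A|$ addresses at most the $B_i$, not the $u_i$, and even for the $B_i$ it does not localize them. The paper resolves all of this by working with \emph{sequences} rather than single words: Lemma~\ref{lem:extract-adequate-subseq} (Simon's theorem applied to the alphabet morphism $\contentmorphism:A^+\to 2^A$, not to transition monoids) extracts a subsequence whose $n$-th term lies in $\LangExp{u}{B}{n}{p}$ for a single fixed pattern with \emph{growing} exponent; the pattern is then normalized to be \emph{proper}; and Lemma~\ref{lem:samepatt} proves, via a pigeonhole argument on $v_0^{(k)}\lhd w_M$ and $w_0^{(k)}\lhd v_M$ together with a bijection lemma, that two proper patterns carried by pointwise $\sim_n$-equivalent adequate sequences must coincide, including the equality $u_i=t_i$ of the interstitial words. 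That uniqueness lemma is the real content of the hard direction and is entirely absent from your sketch.

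A secondary remark on your choice of morphism: the transition-monoid product is not needed and does not by itself control the alphabets of the idempotent blocks (two factors with the same transition action may have different alphabets, whereas the $(=B_i)$-loop requires alphabet exactly $B_i$). The paper instead uses the idempotents of $2^A$ to pin down the $B_i$, and recovers the loops in the automata only at the very end, by the trivial pigeonhole observation that a word of $\LangExp{u}{B}{n}{p}\cap L(\mathcal{A})$ with $n>|Q(\mathcal{A})|$ must traverse a state twice inside each block.
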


\begin{proposition}
  \label{prop:common-ub-path}
  Given two NFAs, one can determine in polynomial time, with respect to the number of states and the size of the alphabet, whether there   exist $\vec u=(u_0,\ldots,u_n)$ and $\vec B=(B_1,\ldots,B_n)$ such   that both NFAs admit a $(\vec u,\vec B)$-path.
\end{proposition}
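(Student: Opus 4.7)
The plan is to reduce the problem to reachability in an augmented product graph $G$ with vertex set $Q_1 \times Q_2$, whose edges are of two kinds: \emph{synchronous letter edges} $(p_1,p_2) \to (q_1,q_2)$ whenever some $a \in A$ satisfies $p_i \xrightarrow{a}_{\mathcal{A}_i} q_i$ for both $i$; and \emph{common-section shortcuts} $(p_1,p_2) \Rightarrow (q_1,q_2)$ whenever there exist $B \subseteq A$ and pivots $s_i$ with $p_i \xrightarrow{\subseteq B}_{\mathcal{A}_i} s_i \xrightarrow{=B}_{\mathcal{A}_i} s_i \xrightarrow{\subseteq B}_{\mathcal{A}_i} q_i$ in each $\mathcal{A}_i$. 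Unfolding the definition of a $(\vec u,\vec B)$-path directly yields the correspondence: both NFAs admit a common $(\vec u,\vec B)$-path iff $F_1 \times F_2$ is reachable from $I_1 \times I_2$ in $G$, where runs of synchronous edges carry the $u_i$-factors and shortcuts realise the $B_i$-sections. Once $G$ is available, BFS settles reachability in polynomial time, so the substance of the proof lies in building the shortcut relation in polynomial time and in $|A|$.

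The key to avoiding the $2^{|A|}$ candidate subalphabets is the following observation. For each pivot pair $(s_1,s_2)$, let $L(s_1,s_2)$ denote the set of $B \subseteq A$ such that for both $i = 1, 2$, the state $s_i$ carries a $=B$-self-loop in $\mathcal{A}_i$. Then $L(s_1,s_2)$ is closed under union---concatenating two cycles at $s_i$ produces a cycle whose label set is their union---so it admits a unique maximum $M(s_1,s_2)$. Since the $\subseteq B$-reachability requirements are monotone in $B$, every shortcut through the pivot $(s_1,s_2)$ is already witnessed by $B = M(s_1,s_2)$. It therefore suffices, for each of the $|Q_1| \cdot |Q_2|$ pivot pairs, to compute $M(s_1,s_2)$ and to test $\subseteq M(s_1,s_2)$-reachability in each $\mathcal{A}_i$.

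The main obstacle is hence the polynomial-time computation of $M(s_1,s_2)$. Letting $C_i(B)$ denote the alphabet of the SCC of $s_i$ in $\mathcal{A}_i \restriction_B$, the operator $f(B) = C_1(B) \cap C_2(B)$ is monotone and satisfies $f(B) \subseteq B$; its fixed points are exactly the alphabets $B$ for which both $s_i$ carry a $=B$-self-loop, since such a loop lives in the SCC of $s_i$ in $\mathcal{A}_i \restriction_B$ and uses every letter of $B$ iff $C_i(B) = B$. Iterating $f$ from $B_0 = A$ therefore produces a non-increasing chain $A \supseteq f(A) \supseteq f^2(A) \supseteq \cdots$ that stabilises in at most $|A|$ steps on the greatest fixed point of $f$, which is $M(s_1,s_2)$ (or $\varnothing$, in which case no shortcut passes through $(s_1,s_2)$). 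Each iteration is a polynomial SCC computation. Collecting these $M$-values over all pivot pairs, assembling the resulting shortcut edges, and running BFS in $G$ then yields the polynomial-time algorithm claimed.
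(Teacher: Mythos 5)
Your proposal is correct and follows essentially the same route as the paper: the iteration $f(B)=C_1(B)\cap C_2(B)$ starting from $A$ is exactly the paper's decreasing sequence $C_1\supseteq C_2\supseteq\cdots$ converging in at most $|A|$ steps to the maximal common loop alphabet for each pivot pair, and your product graph with shortcut edges is just a repackaging of the paper's construction, which adds a fresh letter $a_\tau$ for each witnessed pattern to both automata and then tests emptiness of the intersection. Your explicit justification that the set of admissible $B$ is union-closed (hence has a maximum reached as the greatest fixed point) is a nice touch the paper leaves implicit, but it does not change the argument.
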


As observed above, the characterization of \PT-separable languages given in Proposition~\ref{prop:J-sep-criterion} can be applied to the minimal automata of a regular language and of its complement, to obtain a characterization for minimal automata recognizing \PT-languages. It turns out that with this approach, we retrieve exactly the same characterization as given by Simon in~\cite{Simon:1975:PTE:646589.697341}.

\smallskip
Let us first prove Proposition~\ref{prop:common-ub-path}. 

\begin{proof}[Proof of Prop.~\ref{prop:common-ub-path}]
  We will first show that the following problem is in PTIME: given   states $p_1,q_1,r_1$ in automaton $\mathcal{A}_1$ and $p_2,q_2,r_2$ in   automaton $\mathcal{A}_2$, determine whether there exists a nonempty   alphabet $B\subseteq A$ such that there is an $(= B)$-loop around both   $q_1$ and $q_2$, and $(\subseteq B)$-paths from $p_1$ to $r_1$ via   $q_1$ in $\mathcal{A}_1$,   and from $p_2$ to $r_2$ via $q_2$ in $\mathcal{A}_2$, as pictured in~\figurename~\ref{fig:ub-path2}.

  \begin{figure}[H]
    \vspace*{17ex}
    \hspace*{-62ex}
    \scalebox{.9}{
      \begin{gpicture}
        \gasset{Nw=7,Nh=7}
        \def\aut#1#2#3#4{%
          \drawoval(20,5,70,25,100)
          \node[Nframe=n](A1)(-6,12){\large$\mathcal{A}_{#4}$}
          \node(1)(0,0){${#1}_{#4}$}
          \node(2)(20,0){${#2}_{#4}$}
          \node(3)(40,0){${#3}_{#4}$}
          \drawedge(1,2){$\subseteq B$}
          \drawedge(2,3){$\subseteq B$}
          \drawloop(2){$=B$}}
        \put(0,9){%
          \aut p q r 1
          \put(80,0){\aut p q r 2}}
      \end{gpicture}}
    \caption{Finding a common pattern in $\mathcal{A}_1$ and $\mathcal{A}_2$}
    \label{fig:ub-path2}
  \end{figure}

  To do so, we compute a decreasing sequence $(C_i)_i$ of alphabets over-approximating the maximal alphabet $B$ labeling the loops.    Note that if there exists such an alphabet $B$, it should be contained in
  $$C_1\stackrel{\text{def}}{=}\contentscc{q_1}{\mathcal{A}_1} \cap \contentscc{q_2}{\mathcal{A}_2}.$$
  Using Tarjan's algorithm to compute  strongly connected components in linear time~\cite{Cormen2001introduction},
  one can compute~$C_1$ in linear time as well. Then, we restrict the automata to alphabet $C_1$, and we repeat the process to obtain the sequence~$(C_i)_i$:
  \[
  C_{i+1} \stackrel{\text{def}}{=} \contentscc{q_1}{\mathcal{A}_1\restriction_{C_i}} \cap \contentscc{q_2}{\mathcal{A}_2\restriction_{C_i}}.
  \]
  After a finite number $n$ of iterations, we obtain $C_n =  C_{n+1}$. Note that $n\leq|\content{\mathcal{A}_1} \cap  \content{\mathcal{A}_2}|\leq |A|$. If $C_n = \varnothing$, then there exists  no nonempty $B$ for which there is an ($= B$)-loop around both $p$ and  $q$. If $C_n \ne \varnothing$, then it is the maximal nonempty  alphabet $B$ such that there are $(=B)$-loops around $q_1$ in  $\mathcal{A}_1$ and $q_2$ in $\mathcal{A}_2$. It then remains to determine  whether there exist paths $p_1\xrightarrow{{}\subseteq   B}q_1\xrightarrow{{}\subseteq B}r_1$ and $p_2\xrightarrow{{}\subseteq   B}q_2\xrightarrow{{}\subseteq B}r_2$, which can be performed in linear  time.

  To sum up, since the number $n$ of iterations to compute $C_n=C_{n+1}$ is bounded by $|A|$, and since each computation is linear wrt.~the size of   $\mathcal{A}_1$ and $\mathcal{A}_2$, deciding whether there is a pattern as in \figurename~\ref{fig:ub-path2} in both $\mathcal{A}_1$ and $\mathcal{A}_2$ can be done in polynomial time wrt.~to both $|A|$ and the size of the NFAs.

  Now we build from $\mathcal{A}_1$ and $\mathcal{A}_2$ two new automata   $\tilde{\mathcal{A}_1}$ and $\tilde{\mathcal{A}_2}$  as   follows. The procedure first initializes $\tilde{\mathcal{A}_i}$ as a   copy of $\mathcal{A}_i$. Denote by $Q_i$ the state set of $\mathcal{A}_i$. For each   4-uple $\tau=(p_1,r_1,p_2,r_2)\in Q_1^2\times Q_2^2$ such that there   exist an alphabet $B$, two states $q_1\in Q_1,q_2\in Q_2$ and paths   $p_i\xrightarrow{{}\subseteq     B}q_i\xrightarrow{{}=B}q_i\xrightarrow{{}\subseteq B}r_i$ both for   $i=1$ and $i=2$, we add in both $\tilde{\mathcal{A}_1}$ and $\tilde{\mathcal{A}_2}$     a new letter $a_\tau$ to the alphabet, and   transitions $p_1\xrightarrow{a_\tau}r_1$ and   $p_2\xrightarrow{a_\tau}r_2$. Since there is a   polynomial number of tuples $(p_1,q_1,r_1,p_2,q_2,r_2)$, the above shows that computing these new transitions can be performed in polynomial time. Therefore, computing $\tilde{\mathcal{A}_1}$ and   $\tilde{\mathcal{A}_2}$ can be done in PTIME.

  Now by construction, there exists some factorization pattern $(\vec u,   \vec B)$ such that $\mathcal{A}_1$ and $\mathcal{A}_2$ both have a   $(\vec u, \vec B)$-path if and only if $L(\tilde{\mathcal{A}_1})\cap   L(\tilde{\mathcal{A}_1})\not=\varnothing$. Since   both $\tilde{\mathcal{A}_1}$ and $\tilde{\mathcal{A}_1}$ have been   built in PTIME, this can be decided in polynomial time.
\end{proof}

As a side remark, let us mention that it is crucial that the (${}=B$)-paths, which are required to use exactly the same alphabets, are actually \emph{loops} (occurring in \figurename~\ref{fig:ub-path2} around states $q_1$ and $q_2$). The next statement shows that even for DFAs, the problem is NP-hard if we are looking for paths labeled by a common alphabet, without requesting these paths to be loops. The proof is deferred to the Appendix.
\begin{lemma}\label{lem:sat red}
  The following problem is NP-complete:
  \begin{tabbing}
    \textsf{Input:} \hspace{4mm} \= An alphabet $A = \{a_1, a_2, \ldots, a_n\}$ and two DFA's $\mathcal{A}_1, \mathcal{A}_2$ over $A$.\\
    
    \textsf{Question:} \> Do there exist $u \in L(\mathcal{A}_1)$ and $v \in L(\mathcal{A}_2)$ such that $\content{u} = \content{v}$?
  \end{tabbing}
\end{lemma}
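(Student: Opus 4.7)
The plan has two parts: membership in NP and NP-hardness.

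For NP membership, the naive approach of guessing $u$ and $v$ is delicate, since a shortest word in a regular language with a prescribed content $B$ can a priori be long. I would therefore prove first that minimum-length witnesses are of polynomial length. Given any accepting path in $\mathcal{A}_i$ that reads a word with content exactly $B$, I would locate the $k \leq |B|$ transitions at which each letter of $B$ is used for the \emph{first} time; between two consecutive such transitions, the intermediate subpath uses only already-seen letters, so it can be replaced by a simple path in $\mathcal{A}_i$ of length at most $|Q_i|$, and the same trick shortens the prefix before the first new letter and the suffix after the last. This yields a witness of length $O(|B|\cdot|Q_i|)$. A nondeterministic machine can then guess $u$, $v$ of polynomial length, verify acceptance in $\mathcal{A}_1, \mathcal{A}_2$, and compare $\content{u}$ with $\content{v}$, all in polynomial time.

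For NP-hardness, I would reduce from SAT. Given a formula $\phi$ with variables $x_1,\ldots,x_n$ and clauses $C_1,\ldots,C_m$, take $A = \{t_1,f_1,\ldots,t_n,f_n\}$, where $t_i$ (resp.\ $f_i$) stands for ``$x_i$ is true'' (resp.\ ``$x_i$ is false''). Let $\mathcal{A}_1$ be the DFA with $n+1$ states accepting exactly the words $w_1 w_2 \cdots w_n$ with $w_i \in \{t_i,f_i\}$: every $u \in L(\mathcal{A}_1)$ then has a content of the form $\{s_1,\ldots,s_n\}$ with $s_i \in \{t_i, f_i\}$, i.e.\ encodes a total truth assignment. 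Let $\mathcal{A}_2$ accept $\{a_1 \cdots a_m z : a_j \in \mathrm{lit}(C_j), z \in A^*\}$, where $\mathrm{lit}(C_j) \subseteq A$ is the set of literals that satisfy $C_j$; this is also a polynomial-size DFA.

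Correctness then follows from a short double implication. If $\phi$ is satisfiable by an assignment $S$, I choose $u \in L(\mathcal{A}_1)$ with $\content{u} = S$ and $v = a_1 \cdots a_m b_1 \cdots b_n$ with $a_j \in \mathrm{lit}(C_j) \cap S$ (nonempty by satisfaction) and $b_i \in \{t_i, f_i\} \cap S$; this gives $\content{v} = S = \content{u}$. Conversely, if such $u, v$ exist, then the shape of $\mathcal{A}_1$ forces $\content{u} = \content{v}$ to be a consistent truth assignment, while the forced prefix $a_1 \cdots a_m$ of any $v \in L(\mathcal{A}_2)$ witnesses that this assignment satisfies every clause of $\phi$.

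The main obstacle I expect is the NP-membership argument: obtaining a polynomial bound on witnesses in the presence of the exact-content constraint $\content{u} = B$ blocks the naive pumping-based shortening, so one is forced to commit to the first-occurrence decomposition described above. The hardness reduction is, by contrast, essentially forced once one decides to let letters of $A$ encode signed variables, so that the equality $\content{u}=\content{v}$ simultaneously enforces consistency of the assignment (via $\mathcal{A}_1$) and satisfaction of every clause (via the prefix of $\mathcal{A}_2$).
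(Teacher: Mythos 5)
Your hardness reduction is essentially the paper's: the same signed-literal alphabet, the same $\mathcal{A}_1$ forcing a consistent total assignment via $\content{u}$, and a clause-literal prefix in $\mathcal{A}_2$ (the paper pads the content by concatenating a copy of $\mathcal{A}_1$ where you allow an arbitrary suffix in $A^*$, but the two are interchangeable). Your NP-membership argument, bounding witnesses polynomially via the first-occurrence decomposition plus loop removal (which only deletes transitions and hence cannot enlarge the content), is correct and supplies a step the paper leaves implicit, as its proof only establishes hardness.
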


Let us now prove Proposition~\ref{prop:J-sep-criterion}. Let us first prove the ``if'' direction. The ``only if'' direction is proved in Lemma~\ref{lem:common-ub-path-2}. 

\begin{lemma}
  \label{lem:common-ub-path-1}
  If two NFAs $\mathcal{A}_1$ and $\mathcal{A}_2$ share a common $(\vec u,\vec B)$ path, then the languages $L(\mathcal{A}_1)$ and $L(\mathcal{A}_2)$ are not \PT-separable.
\end{lemma}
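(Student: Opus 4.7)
The plan is to use the standard fact that piecewise testable languages are exactly the unions of $\sim_n$-classes for some $n$; hence $L(\mathcal{A}_1)$ and $L(\mathcal{A}_2)$ are \emph{not} \PT-separable as soon as, for every $n \in \setN$, one can exhibit $w_1 \in L(\mathcal{A}_1)$ and $w_2 \in L(\mathcal{A}_2)$ with $w_1 \sim_n w_2$. My task therefore reduces to producing such a pair from the shared $(\vec u,\vec B)$-path.

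Fix $n$. By Remark~\ref{rem:sublang}, the $(\vec u,\vec B)$-path in $\mathcal{A}_i$ supplies, for each $i \in \{1,2\}$ and $j \in \{1,\ldots,p\}$, words $x_j^{(i)}, y_j^{(i)}, z_j^{(i)}$ satisfying $\content{x_j^{(i)}}\cup\content{z_j^{(i)}} \subseteq \content{y_j^{(i)}} = B_j$, such that
\[
w_i \,:=\, u_0\, x_1^{(i)} (y_1^{(i)})^n z_1^{(i)}\, u_1 \cdots u_{p-1}\, x_p^{(i)} (y_p^{(i)})^n z_p^{(i)}\, u_p \ \in\ L(\mathcal{A}_i).
\]
I will then verify $w_1 \sim_n w_2$ by computing $\text{Sub}_n$ explicitly on each side.

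The computational heart of the argument is the following elementary identity: for any words $x,y,z$ with $\content{x}\cup\content{z} \subseteq \content{y} = B$,
\[
\text{Sub}_n\bigl(x y^n z\bigr) \,=\, \{v \in B^* : |v| \leq n\}.
\]
The inclusion ``$\subseteq$'' is immediate since $\content{xy^n z} = B$; ``$\supseteq$'' holds because $y^n$ contains each letter of $B$ at least $n$ times, so any target word of length $\leq n$ over $B$ can be selected letter-by-letter from distinct copies of $y$. Consequently $\text{Sub}_n\bigl(x_j^{(i)} (y_j^{(i)})^n z_j^{(i)}\bigr)$ depends only on $B_j$ and $n$, not on $i$. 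Since a scattered subword of $w_i$ of length at most $n$ decomposes as $s_0 t_1 s_1 \cdots t_p s_p$ with $s_j \lhd u_j$ and $t_j$ a scattered subword of the $j$-th middle factor, and since both the fixed factors $u_j$ and the available pools for the $t_j$ coincide for $i=1$ and $i=2$, we obtain $\text{Sub}_n(w_1) = \text{Sub}_n(w_2)$, as desired. The only step carrying any real content is the displayed identity; everything else is routine bookkeeping on concatenations of scattered subwords.
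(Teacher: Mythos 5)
Your proposal is correct and follows essentially the same route as the paper: both fix $n$, instantiate the shared $(\vec u,\vec B)$-path via Remark~\ref{rem:sublang} to obtain words of the form $u_0 x_1 y_1^n z_1 u_1\cdots u_{p-1}x_p y_p^n z_p u_p$ in each language, and rest on the identity $\text{Sub}_n(xy^nz)=\content{y}^{\leq n}$, which is exactly the paper's Claim~\ref{claim:1}. The only cosmetic difference is that you spell out why $\text{Sub}_n$ is compatible with concatenation, where the paper simply invokes that $\sim_n$ is a congruence.
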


\begin{proof}
  Let $L$ be a piecewise testable language such that $L(\mathcal{A}_1) \subseteq L$. Using the hypothesis and Remark \ref{rem:sublang}, this implies that $L$ contains a language $$u_0(x_1y_1^* z_1)u_1\cdots u_{p-1}(x_py_p^* z_p)u_p,$$ where $\content{x_i}\cup\content{z_i}\subseteq \content{y_i}=B_i$.
  Similarly, $L(\mathcal{A}_2)$ contains a language $u_0(x'_1y_{1}'^{*}z'_1)u_1\cdots u_{p-1}(x'_py_p'^*z'_p)u_p$, where $\content{x'_i}\cup\content{z'_i}\subseteq \content{y'_i}=B_i$. We will show that for every $n$, there is an element in this language which is $\sim_n$-equivalent to an element of $u_0(x_1y_1^*z_1)u_1\cdots u_{p-1}(x_py_p^*z_p)u_p$, using the following claim.
  \begin{myclaim}
    \label{claim:1}
    Given $x, x', y, y', z, z' \in A^*$ that satisfy 
    \[
    \begin{array}{ccl}
      \content{x} \cup \content{z} & \subseteq & \content{y}, \\ 
      \content{x'} \cup \content{z'} & \subseteq & \content{y'} = \content{y},
    \end{array}
    \]
    then for every $n \in \mathbb{N}$, 
    \[
    x y ^n z \sim_n x' y'^n z'. 
    \]
  \end{myclaim}
  \noindent
  Indeed, from the inclusions
  \begin{center}
    $\content{y} ^{\leq n} = $ Sub$_n (y^n) \subseteq $ Sub$_n (x y^n
    z) \subseteq \content{y} ^{\leq n}$,
  \end{center}
  it follows that Sub$_n (x y^n z) = \content{y} ^{\leq
    n}$. In the same way, Sub$_n (x'y'^nz') = \content{y'} ^{\leq n}$, which is
  equal to $\content{y} ^{\leq n}$. Thus $x y ^n z \sim_n x' y'^n
  z'$. This establishes the claim.

  Applying this, we obtain that $x_i y_i ^n z_i \sim_n x'_{i} y_{i}'^{n}
  z_i'$  for every $i$. Since $\sim_n$ is a congruence
  , we obtain for all $n \in \mathbb{N}$: 
  \[
  u_0(x_1y_1^n z_1)u_1\cdots u_{p-1}(x_py_p^n z_p)u_p \sim_n u_0(x'_1y_{1}'^n z'_1)u_1\cdots u_{p-1}(x'_py_p'^n z'_p)u_p.
  \]
  Since $L$ is piecewise testable, it is a union of $\sim_n$-equivalence classes for some $n$, thus it cannot be disjoint from $L(\mathcal{A}_2)$.  
\end{proof}

To prove the other direction of Proposition~\ref{prop:J-sep-criterion}, we introduce some notation. For $B\subseteq A$, let us denote by $\exactcontent B$ the set of words with alphabet exactly $B$: $$\exactcontent{B} = \{ w \in B^*\ |\ \content{w} = B\}.$$
Given a factorization pattern $(\vec u,\vec B)$, with $\vec
u=(u_0,\ldots,u_p)$ and $\vec B=(B_1,\ldots,B_p)$, we let
\begin{equation*}
  \Lang u B n=\LangExp  u B n p.
\end{equation*}
We say that a sequence $(w_n)_n$ is \emph{$(\vec u,\vec B)$-adequate} if
\begin{equation*}
  \forall n\geq0,\ w_n\in\Lang u B n.
\end{equation*}
A sequence is called \emph{adequate} if it is $(\vec u,\vec B)$-adequate for some factorization pattern~$(\vec u,\vec B)$.

\begin{lemma}\label{lem:extract-adequate-subseq}
  Every sequence $(w_n)_n$ of words admits an adequate subsequence.
\end{lemma}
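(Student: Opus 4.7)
The plan is to follow the strategy the authors already announce in the introduction and invoke Simon's factorization forest theorem, applied to the content morphism $\contentmorphism : A^+ \to (2^A \setminus \{\varnothing\}, \cup)$. Since $(2^A, \cup)$ is a finite (and fully idempotent) monoid, Simon's theorem yields a constant $d = d(|A|)$ such that every word in $A^+$ admits a factorization tree of depth at most~$d$. For each $n$ I fix one such tree $T_n$ for $w_n$ (handling the case of infinitely many $w_n = \varepsilon$ by a trivial pigeonhole on a constant subsequence), and proceed by induction on~$d$.

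In the base case $d = 0$, each $w_n$ is a single letter of $A$; some letter $a$ must then appear for infinitely many $n$, giving a constant adequate subsequence with the trivial pattern $p=0$, $\vec u = (a)$. For the inductive step, I pigeonhole on the type of the root of $T_n$: along some subsequence, either all roots are binary, splitting $w_n = w_n^L w_n^R$ with both subtrees of depth $< d$; or all roots are idempotent of a common content $B$, decomposing $w_n = w_n^{(1)} \cdots w_n^{(k_n)}$ with each $\content{w_n^{(i)}} = B$ and each subtree of depth $< d$.

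In the binary case, I apply the induction hypothesis to $(w_n^L)_n$ to extract an adequate subsequence with pattern $(\vec u^L, \vec B^L)$, then restrict to those indices and apply the hypothesis again to the corresponding $(w_n^R)$, producing a further adequate subsequence with pattern $(\vec u^R, \vec B^R)$. Concatenating the two patterns (fusing the boundary words $u^L_{p^L}$ and $u^R_0$ into a single $u$) gives an adequate pattern for the combined sub-subsequence, using the monotonicity $\exactcontent{B}^m \subseteq \exactcontent{B}^k$ for $m \ge k \ge 1$ (obtained by regrouping adjacent blocks in $\exactcontent{B}$, which is closed under concatenation) to reindex. In the idempotent case, I pass to a subsequence where either $k_n$ is a constant $k$---in which case I iterate the binary argument $k$ times across $(w_n^{(1)})_n, \ldots, (w_n^{(k)})_n$ and concatenate the resulting patterns---or $k_n \to \infty$, in which case I extract further so that $k_n \ge n$, whereupon $w_n \in \exactcontent{B}^{k_n} \subseteq \exactcontent{B}^n$ and the pattern $\vec u = (\varepsilon,\varepsilon)$, $\vec B = (B)$ directly witnesses adequacy.

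The main obstacle I anticipate is the bookkeeping for concatenating adequate patterns and for correctly reindexing nested subsequences: once Simon's theorem bounds the tree depth uniformly, the remaining work is pigeonholing on finitely many discrete data and tracking the $\exactcontent{B}^m \subseteq \exactcontent{B}^k$ monotonicity. A minor additional care is needed for the $n=0$ term of the definition of adequacy, where $\Lang{u}{B}{0}$ is the singleton $\{u_0 \cdots u_p\}$; this is easily accommodated by choosing the $u$-words of each constructed pattern consistently with $w_{\varphi(0)}$, or by restricting attention to $n \ge 1$.
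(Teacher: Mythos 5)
Your proof is correct and follows essentially the same route as the paper's: Simon's factorization forest theorem applied to the alphabet morphism $\contentmorphism : A^+\to 2^A$, induction on the (uniformly bounded) tree depth, and a case split on the root's arity --- bounded arity yielding a concatenation of adequate subsequences, unbounded arity yielding $w_n\in(\exactcontent{B})^{k_n}\subseteq(\exactcontent{B})^n$. Your extra bookkeeping about reindexing via $(\exactcontent{B})^{m}\subseteq(\exactcontent{B})^{k}$ and about the $n=0$ term only makes explicit details the paper leaves implicit.
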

\begin{proof}
  We use Simon's Factorization Forest Theorem, which we recall. See
  \cite{Simon199065,DBLP:conf/mfcs/Kufleitner08,DBLP:journals/tcs/Colcombet10}
  for proofs and extensions of this theorem. A
  \emph{factorization tree} of a nonempty word $x$ is a finite ordered unranked tree $T(x)$
  whose nodes are labeled by nonempty words, and such that:
  \begin{itemize}
  \item all leaves of $T(x)$ are labeled by letters,
  \item all internal nodes of $T(x)$ have at least 2 children,
  \item if a node labeled $y$ has $k$ children labeled $y_1,\ldots, y_k$
    from left to right, then $y=y_1\cdots y_k$.
  \end{itemize}
  Given a semigroup morphism $\varphi:A^+\to S$ into a finite semigroup
  $S$, such a factorization tree is \emph{$\varphi$-Ramseyan} if every
  internal node has either 2 children, or $k$ children labeled
  $y_1,\ldots, y_k$, in which case $\varphi$ maps all words $y_1,\ldots,y_k$ to the
  same idempotent of~$S$. Simon's Factorization Forest Theorem states that
  every word has a $\varphi$-Ramseyan factorization tree of height at most $3|S|$.

  Let now $(w_n)_n$ be a sequence of words. We use Simon's Factorization
  Forest Theorem with the morphism $\contentmorphism
  :A^+\to2^A$.
  
  Consider a sequence $(T(w_n))_n$, where $T(w_n)$ is an $\contentmorphism$-Ramseyan tree
  given by the Factorization Forest Theorem. In particular, 
  $T(w_n)$ has depth at most $3 \cdot 2^{|A|}$. Therefore, extracting a subsequence if
  necessary, one may assume that the sequence of depths of the trees
  $T(w_n)$ is a constant $H$. We argue by induction on $H$. If $H=0$, then
  every $w_n$ is a letter. Hence, one may extract from $(w_n)_n$
  a constant subsequence, which is therefore adequate.

  We denote the arity of the root of $T(w_n)$ by $\arity{w_n}$, and we
  call it the arity of $w_n$. If $H>0$, two cases may arise:
  \begin{enumerate}
  \item One can extract from $(w_n)_n$ a subsequence of bounded arity.
    Therefore, one may extract from $w_n$ a subsequence of constant
    arity, say $K$. This implies that each $w_n$ has a factorization in
    $K$ factors
    $$w_n=w_{n,1}\cdots w_{n,K},$$
    where $w_{n,i}$ is the label of the $i$-th child of the root in
    $T(w_n)$.  Therefore, the $\contentmorphism$-Ramseyan subtree of each
    $w_{n,i}$ is of height at most $H-1$. By induction, one can extract
    from $(w_{n,i})_n$ an adequate subsequence. Proceeding iteratively
    for $i=1,2,\ldots K$, one extracts from $(w_n)_n$ a subsequence
    $(w_{\sigma(n)})_n$ such that every $(w_{\sigma(n),i})_n$ is adequate.
    But a finite product of adequate sequences is obviously
    adequate. Therefore, the subsequence $(w_{\sigma(n)})_n$ of
    $(w_n)_n$ is also adequate.
    
  \item The arity of $w_n$ grows to infinity. Therefore, extracting if
    necessary, one can assume for every $n$, $\arity{w_n}\geq
    \max(n,3)$.  Since all arities of words in the sequence are at least
    3, all children of the root map to the same idempotent in
    $2^A$. But this says that each word from the subsequence is of the
    form
    $$w_{\sigma(n)}=w_{n,1}\cdots w_{n,K_n},$$
    with $K_n\geq n$, and where the alphabet of $w_{n,i}$ is the same
    for all $i$, say $B$. Therefore, $w_{\sigma(n)}\in (\exactcontent
    B)^{K_n}\subseteq(\exactcontent B)^n$. Therefore, $(w_{\sigma(n)})_n$ is adequate.\popQED    
  \end{enumerate}
\end{proof}

\smallskip\noindent
We now say that a factorization pattern $(\vec u,\vec B)$ is \emph{proper} if
\begin{enumerate}
\item\label{item:2} 
  for all $i$, $\text{last}(u_i) \notin B_i$ and $\text{first}(u_i) \notin B_{i-1}$, 
\item\label{item:3} 
  for all $i$, $u_i = \varepsilon \Rightarrow \big(B_{i-1} \nsubseteq B_i \text{ and } B_i \nsubseteq B_{i-1}\big)$.
\end{enumerate}

Note that if a sequence $(w_n)_n$ is adequate, then there exists a proper factorization pattern $(\vec u, \vec B)$ such that $(w_n)_n$ is $(\vec u, \vec B)$-adequate. This is easily seen from the following observations and their symmetric counterparts:  
\[
\begin{array}{rcl}
  u = u_1 \cdots u_k \text{ and } u_k \in B &\ \Rightarrow\ & u_1 \cdots u_k B^n \subseteq u_1 \cdots u_{k-1} B^n, \\
  B_{i-1} \subseteq B_i &\ \Rightarrow\ & B_{i-1}^n B_{i}^n \subseteq B_i^n.
\end{array}
\] 

The following lemma gives a condition under which two sequences share a factorization pattern. This lemma is very similar to \cite[Theorem 8.2.6]{JAbook}.

\begin{lemma}
  \label{lem:samepatt}
  Let $(\vec u,\vec B)$ and $(\vec t, \vec C)$ be proper
  factorization patterns.
  Let $(v_n)_{n}$ and  $(w_n)_{n}$  be two sequences of words such that
  \begin{itemize}[leftmargin=10ex]
  \item  $(v_n)_{n}$ is $(\vec u,\vec B)$-adequate

  \item   $(w_n)_{n}$ is $(\vec t, \vec
    C)$-adequate
 \item $v_n \sim_n w_n$  for every $n\geq0$.
  \end{itemize}
  Then, $\vec u=\vec t$ and $\vec B = \vec C$.
\end{lemma}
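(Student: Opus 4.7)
My plan consists of two stages. In the first, I translate the hypothesis $v_n \sim_n w_n$ into an equality of downward closures (for $\lhd$) of the two pattern languages
\[
P(\vec u,\vec B) := u_0 B_1^* u_1 \cdots u_{p-1} B_p^* u_p,\qquad P(\vec t,\vec C) := t_0 C_1^* t_1 \cdots t_{q-1} C_q^* t_q.
\]
In the second stage, I show that a proper factorization pattern is uniquely determined by the downward closure of its pattern language; applied to both sides, this gives $\vec u = \vec t$ and $\vec B = \vec C$.

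For the first stage, I claim that $\text{Sub}_n(v_n) = \text{Sub}_n(P(\vec u, \vec B))$ for every $n$. One inclusion is immediate since $v_n \in P(\vec u, \vec B)$. For the reverse, any scattered subword of length $\leq n$ of an element $u_0 y_1 u_1 \cdots u_{p-1} y_p u_p$ of $P(\vec u, \vec B)$ decomposes as an alternation of pieces taken from the fixed $u_i$'s and subwords $\sigma'_i \in B_i^{\leq n}$ of the $y_i$'s. Since the $i$-th long block of $v_n$ is a product of $n$ words each of alphabet exactly $B_i$, any such $\sigma'_i$ is realized as a scattered subword by choosing one letter from each of the first $|\sigma'_i|$ consecutive factors. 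A symmetric argument applies to $w_n$, so combining with $v_n \sim_n w_n$ for every $n$ yields $\text{Sub}(P(\vec u, \vec B)) = \text{Sub}(P(\vec t, \vec C))$.

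For the second stage I argue by induction on $p + q$. The base case $p = q = 0$ is immediate: $P(\vec u, \vec B) = \{u_0\}$ and $P(\vec t, \vec C) = \{t_0\}$, and a word is uniquely determined by its set of scattered subwords. The case where exactly one of $p, q$ vanishes cannot occur, since the downward closure of a singleton is finite while that of a pattern with at least one nonempty long block is infinite. For the inductive step, properness is used to identify the ``first piece'' of each pattern unambiguously from its downward closure: condition~(1) prevents a boundary letter of a $u_i$ from being absorbed into the adjacent long block, and condition~(2) prevents two adjacent long blocks from being merged via an empty $u_i$. Together they force the first nontrivial piece on both sides to coincide; after stripping it, the residual patterns remain proper, have equal downward closures, and have strictly smaller $p + q$, so the induction hypothesis applies.

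The main obstacle lies in the second stage: making the characterization of the ``first piece'' purely in terms of the downward closure requires some care, especially in the subcase $u_0 = t_0 = \varepsilon$, where one must recover $B_1$ from $C_1$ by looking at which letters admit arbitrarily long initial powers in the downward closure, and must verify that properness is preserved after stripping so that the induction can fire cleanly. Once this structural uniqueness of proper representations is in place, the lemma follows.
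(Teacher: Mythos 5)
Your first stage is correct and is a nice reformulation: since the $i$-th long block of $v_n$ is a product of $n$ words each of alphabet exactly $B_i$, every word of $B_i^{\leq n}$ embeds into it, so indeed $\mathrm{Sub}_n(v_n)=\mathrm{Sub}_n\bigl(u_0B_1^*u_1\cdots B_p^*u_p\bigr)$ for every $n$, and the hypothesis $v_n\sim_n w_n$ yields equality of the downward closures of the two pattern languages. The trouble is that this is only a restatement of the problem: the lemma is essentially \emph{equivalent} to your second-stage claim that a proper factorization pattern is determined by the downward closure of its pattern language (conversely, from two patterns with equal downward closures one builds $\sim_n$-equivalent adequate sequences by pumping each block $n$ times). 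All of the difficulty therefore sits in stage two, which you do not prove; you sketch a ``strip the first piece'' induction and yourself flag that the key step ``requires some care''.

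Moreover, the sketch as given does not go through, because the downward closure is blind to position and the local tests you propose do not isolate the first piece. The set of letters $a$ with $a^m\in\mathrm{Sub}(P)$ for all $m$ is $B_1\cup\dots\cup B_p$, not $B_1$; the set of first letters of words of $\mathrm{Sub}(P)$ is the alphabet of the whole pattern, not the first letter of $u_0$; and even a refined test such as ``$a^mw\in\mathrm{Sub}(P)$ for all $m$ and all $w\in\mathrm{Sub}(P)$'' is delicate, since properness only forbids inclusions between \emph{adjacent} blocks separated by an empty $u_i$, so a later block $B_j$ may well contain $B_1\cup\content{u_1}\cup B_2$ and absorb long prefixes. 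Ruling this out requires challenging the embedding with a globally pumped word $v_0^{(k)}=u_0b_1^ku_1\cdots b_p^ku_p$ with $k>\|(\vec t,\vec C)\|$, applying the pigeonhole principle to map each $B_i$ into some $C_j$ in an order-preserving way, and then using properness to show that the two resulting maps $\mathbf{B}\to\mathbf{C}$ and $\mathbf{C}\to\mathbf{B}$ are mutually inverse --- which is exactly the argument of the paper, via \cite[Lemma~8.2.5]{JAbook}. Until you supply an argument of that global kind for your stage two, the proof is incomplete.
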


\begin{proof}
   For a factorization pattern $(\vec u,\vec B)$, we define 
   \[
   \| (\vec u,\vec B) \| := (\sum_{i=0}^{p} |u_i| )+p,
   \] 
   where $p$ is the length of the vector $\vec u$. Let 
   \[
   k := \max(\| (\vec u,\vec B) \|, \| (\vec t,\vec C) \|) +1.
   \] 

  Consider the first word of the sequence $(v_n)_{n}$, \emph{i.e.},~$v_0 = u_0 b_1 u_1 \cdots b_{p} u_p$, where $\content{b_i} = B_i$. Define 
  \[
  v_0^{(k)} := u_0 b_1^k u_1 \ldots b_{p}^k u_p.
  \]
  Recall that $(v_n)_{n}$ being a $(\vec u,\vec B)$-adequate sequence
  means that $$v_n \in \LangExp  u B n p$$ for every $n$. Thus,
  we have for every $\ell \geq k \cdot  \max(|b_1|, \ldots, |b_n|)$ that $v_0^{(k)} \lhd v_\ell$. 
  Note that whenever $\ell ' \geq \max(\ell, | v_0^{(k)}|)$, we have that $v_0^{(k)} \in$ Sub$_{\ell '}(v_{\ell '})$. And, using the assumption that $v_n \sim_n w_n$ for all $n$, this gives that $v_0^{(k)} \lhd w_{\ell '}$.
    In the same way, for $w_0 = t_0 c_1 t_1 \cdots c_{q} t_q$,  we obtain
  an index $m$ such that for every $m'  \geq \max(m, | w_0^{(k)}|)$, both $w_0^{(k)} \lhd w_{m'}$ and $w_0^{(k)}
  \lhd v_{m'}$ hold.   
  
Let $M := \max(\ell ', m')$. Then $v_0^{(k)} \lhd v_{M}, w_{M}$ and $w_0^{(k)} \lhd v_{M}, w_{M}$.

  \smallskip
  Now fix a factor $b_i^k$ of $v_0^{(k)}$. In particular, $b_i^k\lhd
  w_M$. Since $k >  \| (\vec t,\vec C) \|$ and $|b_i| >0$, the pigeonhole principle gives that there is some $C_j$ with $\content{b_i} \subseteq C_j$.

 Exploiting this, we want to define a bijection between the set of indexed alphabets in $\vec B$ and the set of those in $\vec C$ that will help us to show that $(\vec u,\vec B) = (\vec t, \vec C)$.

  Let $\mathbf{B} := \{ (B_1, 1),\ldots, (B_{p}, p)\}$ and $\mathbf{C} := \{
  (C_1,1), \ldots, (C_{q},q)\}$. We define a function $f : \mathbf{B} \rightarrow \mathbf{C}$, by sending $(B_i,i)$ to that $(C_j,j)$ for which $c'_j\in (\exactcontent{C_j})^M$ is the first factor of $w_{M}$ used to fully read $b_i$, while reading $v_0^{(k)}$ as a scattered subword of $w_{M}$.

  The function $g : \mathbf{C} \rightarrow \mathbf{B}$ is defined analogously. The functions $f$ and $g$ preserve the order of the indices and pointwise preserve the alphabet.
  If we show that $f$ and $g$ define a bijective correspondence between $\mathbf{B}$ and $\mathbf{C}$, then $p=q$. The fact that $f$ and $g$ pointwise preserve the alphabet would then imply that $B_i = C_i$, for every $i$. \\

  To establish that $f$ and $g$ are each others inverses, we apply Lemma 8.2.5 from \cite{JAbook}, which we shall first repeat:  
  \begin{lemma}[{\cite[Lemma~8.2.5]{JAbook}}]
    \label{lemma-825}        
    Let $X$ and $Y$ be finite sets and let $P$ be a partially ordered
    set. Let $f : X \rightarrow Y, g : Y \rightarrow X, p : X
    \rightarrow P$ and $q : Y \rightarrow P$ be functions such that
    \begin{enumerate}
    \item\label{it:1} for any $x \in X, p(x) \leq q(f(x))$,
    \item\label{it:2} for any $y \in Y, q(y) \leq p(g(y))$,
    \item \label{it:3}if $x_1, x_2 \in X, f(x_1) = f (x_2)$ and $p(x_1)
      = q ( f(x_1))$, then $x_1 = x_2$,
    \item \label{it:4}if $y_1, y_2 \in Y, g(y_1) = g (y_2)$ and $q(y_1)
      = p ( g(y_1))$, then $y_1 = y_2$.
    \end{enumerate}
    Then $f$ and $g$ are mutually inverse functions and $p = q \circ f$ and $q = p \circ g$.     
  \end{lemma}

   The functions $f$ and $g$ fulfill the conditions of this lemma, if we let $X = \mathbf{B}, Y = \mathbf{C}$, let $P$ be the set of alphabets, partially ordered by inclusion, and let $p$ and $q$ be the projections onto the first coordinate:
     
   \ref{it:1} and~\ref{it:2} hold since $f$ and $g$ pointwise preserve the alphabet. 
  Suppose that $f(B_{i_1}) = f(B_{i_2})$ and that $B_{i_1} = f(B_{i_1}) $. This means that a factor $b_{i_1}$ and a factor $b_{i_2}$ of $v_0^{(k)}$ are read inside the same factor $c'_j$ of $w_{M}$. Thus $\content{b_{i_1} u_{i_1} \cdots {b_{i_2}}} \subseteq \content{c'_j} = f(B_{i_1}) = B_{i_1} = \content{b_{i_1}}$. But we assumed that $(\vec u,\vec B)$ is a \emph{proper} factorization pattern, so $i_1$ must be equal to $i_2$. This shows that~\ref{it:3} holds, and~\ref{it:4} is proved similarly. 

It follows that indeed $f$ and $g$ define a bijective correspondence between $\mathbf{B}$ and $\mathbf{C}$, thus  $p=q$ and $B_i = C_i$, for every $i$. Since we are dealing with proper factorization patterns, $v_0^{(k)} \lhd w_{M}$ now implies that $t_i \lhd u_i$, for every $i$. On the other hand, $w_0^{(k)} \lhd v_{M}$ now implies that $u_i \lhd t_i$, for every $i$. Thus, for every $i$, $u_i = t_i$.
\end{proof}

Now we are equipped to prove the ``only if'' direction of Proposition~\ref{prop:J-sep-criterion}. 

\begin{lemma}
  \label{lem:common-ub-path-2}
  If the languages recognized by two DFAs $\mathcal{A}_1$ and
  $\mathcal{A}_2$ are not \PT-separable, then $\mathcal{A}_1$ and $\mathcal{A}_2$ share a
  common $(\vec u,\vec B)$-path.
\end{lemma}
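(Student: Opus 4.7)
The plan is to convert non-\PT-separability into an infinite family of $\sim_n$-related witnesses, use Lemmas~\ref{lem:extract-adequate-subseq} and~\ref{lem:samepatt} to align them on a single factorization pattern $(\vec u,\vec B)$, and then read off the required $(\vec u,\vec B)$-path in each automaton by a pigeonhole argument on its state set.

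First, for every $n\geq 0$, consider $L_n = \{w\in A^* : \exists\, v\in L(\mathcal{A}_1),\ v\sim_n w\}$. It is a union of $\sim_n$-classes, hence piecewise testable, and obviously contains $L(\mathcal{A}_1)$. Since $L(\mathcal{A}_1)$ and $L(\mathcal{A}_2)$ are not \PT-separable, $L_n$ cannot separate them, so $L_n \cap L(\mathcal{A}_2) \neq \varnothing$. This produces, for each $n$, words $v_n\in L(\mathcal{A}_1)$ and $w_n\in L(\mathcal{A}_2)$ with $v_n\sim_n w_n$.

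Next, I would apply Lemma~\ref{lem:extract-adequate-subseq} to $(v_n)_n$ to extract a $(\vec u,\vec B)$-adequate subsequence, and apply it again to the matching subsequence of $(w_n)_n$ to further extract a $(\vec t,\vec C)$-adequate subsequence. Using the remark following the definition of proper patterns, I may assume both $(\vec u,\vec B)$ and $(\vec t,\vec C)$ are proper. Since extractions preserve the $\sim_n$-relation along indices $n'\geq n$, Lemma~\ref{lem:samepatt} forces $\vec u=\vec t$ and $\vec B=\vec C$, yielding a single common factorization pattern governing both sequences. It then remains to exhibit a $(\vec u,\vec B)$-path in each automaton. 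Fix $n$ strictly larger than the number of states of $\mathcal{A}_1$. The accepting run of $\mathcal{A}_1$ on $v_n = u_0 b_{1,n} u_1 \cdots u_{p-1} b_{p,n} u_p$, in which each $b_{i,n}$ is a concatenation of $n$ nonempty words of alphabet exactly $B_i$, visits more than $|Q_1|$ states at the boundaries of these $n$ factors inside each block. By the pigeonhole principle, two such boundary states coincide, producing a loop whose label is a product of entire $\exactcontent{B_i}$-factors, hence has alphabet exactly $B_i$: a $(=B_i)$-loop. The run segments on either side stay in $B_i^*$, supplying the two $(\subseteq B_i)$-paths that complete the $i$-th block. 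Gluing over $i$ gives a $(\vec u,\vec B)$-path in $\mathcal{A}_1$, and the symmetric argument applied to $w_n$ gives one in $\mathcal{A}_2$.

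The main obstacle I anticipate is the alignment step: the two extractions must be carried out over a common index sequence on which $\sim_n$-equivalence between $v_n$ and $w_n$ still holds, so that Lemma~\ref{lem:samepatt} applies with matching indices. The pigeonhole argument at the end is then routine, its crucial structural input being that an adequate block consists of $n$ factors of exact alphabet $B_i$, which is precisely what forces the extracted loop to be $(=B_i)$ rather than merely $(\subseteq B_i)$.
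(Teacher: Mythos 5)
Your proof is correct and follows essentially the same route as the paper's: produce $\sim_n$-equivalent witnesses $v_n\in L(\mathcal{A}_1)$, $w_n\in L(\mathcal{A}_2)$ for every $n$, apply Lemma~\ref{lem:extract-adequate-subseq} twice to make both sequences adequate, invoke Lemma~\ref{lem:samepatt} to identify the two factorization patterns, and conclude with a pigeonhole argument on a term with $n$ larger than the number of states. You are in fact somewhat more explicit than the paper on three points it leaves implicit: the justification of the initial witnesses via the piecewise testable language $L_n$, the reduction to proper patterns needed to apply Lemma~\ref{lem:samepatt}, and the reason the extracted loop is an $(=B_i)$-loop rather than merely a $(\subseteq B_i)$-one.
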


\begin{proof}
  By hypothesis, for every $n \in \mathbb{N}$, there exist $v_n \in L(\mathcal{A}_1)$ and $w_n \in L(\mathcal{A}_2)$ such that
  \begin{equation}
    \label{eq:1}
    v_n \sim_n w_n.
  \end{equation}
  This defines an infinite sequence of pairs $(v_n,w_n)_n$, from which we will iteratively extract infinite subsequences to obtain additional properties, while keeping~\eqref{eq:1}. 

  By Lemma~\ref{lem:extract-adequate-subseq}, one can extract from $(v_n,w_n)_n$ a subsequence whose first component forms an adequate sequence. From this subsequence of pairs, using Lemma~\ref{lem:extract-adequate-subseq} again, we extract a subsequence whose second component is also adequate (note that the first component remains adequate). Therefore, one can assume that both $(v_n)_n$ and $(w_n)_n$ are themselves adequate.



  Lemma~\ref{lem:samepatt} shows that one can choose the \emph{same} factorization pattern $(\vec u,\vec B)$ such that both $(v_n)_n$ and $(w_n)_n$ are $(\vec u,\vec B)$-adequate. Finally, by the following claim, we then obtain that both $\mathcal{A}_1$ and $\mathcal{A}_2$ admit a $(\vec{u}, \vec{B})$-path. 

\begin{myclaim}
If $L(\mathcal{A})$ contains a  $(\vec{u}, \vec{B})$-adequate sequence, then $\mathcal{A}$ admits a $(\vec{u}, \vec{B})$-path. 
\end{myclaim}

 Indeed, $L(\mathcal{A})$ contains a $(\vec{u}, \vec{B})$-adequate sequence $(v_n)_n$, i.e.
  \[
  \forall n \geq 0,\  v_n \in \LangExp  u B n p \cap L(\mathcal{A}).
  \]
  Let $v_n$ be a sufficiently large term in this sequence, e.g.~with $n > |Q(\mathcal{A})|$. Now the path used to read $v_n$ in $\mathcal{A}$ must traverse loops labeled by each of the $B_i$'s and clearly, by the shape of $v_n$, this is a $(\vec{u}, \vec{B})$-path. 
%
\end{proof}

\newpage\appendix

\def\pv#1{\ensuremath{{\sf#1}}}
\def\tcl#1{\ensuremath{\mathrm{cl}(#1)}}
\let\cl\tcl
\def\Om#1#2{\ensuremath{\overline\Omega_{#1}{\sf#2}}}
\def\om#1#2{\ensuremath{\Omega_{#1}{\sf#2}}}

\section{Connection with profinite semigroup theory:   overview}

We show that separability of two languages by a \pv
V-recognizable language is equivalent to the nonemptiness of the
intersection of their closures in the free pro-\pv V semigroup. This
was already shown in~\cite{MR1709911}. The material of
Section~\ref{sec:preliminary-material} can be found
in~\cite{JAbook}.

\subsection{Background}
\label{sec:preliminary-material}

Fix a finite alphabet $A$ and a pseudovariety~\pv V. A semigroup $T$
\emph{separates} $u,v\in A^+$ if there exists a morphism $\varphi:A^+\to T$
such that $\varphi(u)\neq\varphi(v)$.  Given $u,v\in A^+$, let $r_\pv
V(u,v)=\min\bigl\{|T|:T\in\pv V \text{ and $T$ separates $u$ and
  $v$}\bigr\}\in\mathbb{N}\cup\{\infty\}$. Assume for simplicity that two distinct
words can be separated by some semigroup of \pv V. Then $d_\pv
V(u,v)=2^{-r_\pv V(u,v)}$, with $2^{-\infty}=0$, defines a metric
on~$A^+$. A sequence $(u_n)_n$ is Cauchy for this metric if for every
morphism $\varphi:A^+\to T$, $(\varphi(u_n))_n$ is eventually constant. Let
$(\Om AV,d_\pv V)$ be the completion of the metric space~$(A^+,d_\pv
V)$. By construction, $A^+$ is dense in $\Om AV$. Pointwise
multiplication of classes of Cauchy sequences transfers the semigroup
structure of $A^+$ to~$\Om AV$, on which the multiplication is continuous.

\begin{proposition}
  \label{prop:compact}
  $(\Om AV,d_\pv V)$ is compact.
\end{proposition}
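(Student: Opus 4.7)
The plan is to use the standard equivalence that in a complete metric space, compactness is equivalent to total boundedness. Since $\Om AV$ is by construction a completion, it is complete, and since $A^+$ is dense in $\Om AV$, it suffices to prove that $(A^+, d_\pv V)$ is totally bounded, i.e.\ for every $\varepsilon > 0$ there is a finite subset $F \subseteq A^+$ such that every $u \in A^+$ lies within distance $\varepsilon$ of some element of $F$.

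To produce such a finite $\varepsilon$-net, first I would fix $n$ with $2^{-n} < \varepsilon$ and define an equivalence relation $\sim_n$ on $A^+$ by declaring $u \sim_n v$ iff no semigroup $T \in \pv V$ with $|T| \leq n$ separates $u$ and $v$. Equivalently, $u \sim_n v$ iff $r_\pv V(u,v) > n$, i.e.\ iff $d_\pv V(u,v) \leq 2^{-(n+1)} < \varepsilon$. Hence each $\sim_n$-class has diameter at most $2^{-(n+1)}$, so once I select one representative from each $\sim_n$-class I obtain an $\varepsilon$-net.

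The key step is therefore to bound the number of classes of $\sim_n$. Because $\pv V$ is a pseudovariety of \emph{finite} semigroups, there are, up to isomorphism, only finitely many semigroups of $\pv V$ of cardinality at most $n$; fix representatives $T_1,\ldots,T_k$. For each such $T_j$, a morphism $\varphi : A^+ \to T_j$ is determined by its values on the letters of $A$, so the set of morphisms $A^+ \to T_j$ has cardinality at most $|T_j|^{|A|}$, which is finite. Each morphism $\varphi : A^+ \to T_j$ induces a partition of $A^+$ into at most $|T_j|$ classes (the fibers of $\varphi$). Taking the common refinement over all $T_j$ and all $\varphi$'s yields a partition of $A^+$ into finitely many classes, and by definition two words lie in the same class of this common refinement iff they are not separated by any semigroup of $\pv V$ of size $\leq n$, i.e.\ iff they are $\sim_n$-equivalent. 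Thus $\sim_n$ has finitely many classes, which gives the desired finite $\varepsilon$-net and completes the argument.

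I do not expect any serious obstacle here: the whole proof rests on the elementary observations that (i) there are finitely many finite semigroups of bounded cardinality up to isomorphism, (ii) each such semigroup admits only finitely many morphisms from $A^+$, and (iii) $\Om AV$ is complete by construction. The mildest subtlety is the bookkeeping between the values of $r_\pv V$ and the metric $d_\pv V = 2^{-r_\pv V}$, which only affects the precise choice of $n$ in terms of $\varepsilon$.
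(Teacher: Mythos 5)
Your proof is correct, but it reaches compactness by a different standard characterization than the paper does. The paper proves \emph{sequential} compactness directly: it first uses density of $A^+$ in $\Om AV$ to replace an arbitrary sequence by a sequence of words, and then extracts a Cauchy subsequence by a diagonal argument over the (finitely many) morphisms $\varphi:A^+\to T$ with $|T|\leq k$. You instead invoke the equivalence ``complete $+$ totally bounded $\Leftrightarrow$ compact'' and verify total boundedness of the dense subset $A^+$ by counting the classes of the equivalence $\sim_n$ (``not separated by any $T\in\pv V$ with $|T|\leq n$''), which is the common refinement of the finitely many fiber partitions. Both arguments rest on exactly the same finiteness input --- finitely many isomorphism types of semigroups of bounded size and finitely many morphisms from $A^+$ into each, since $A$ is finite --- so the mathematical core is shared; what differs is the packaging. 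Your version avoids the diagonalization and yields an explicit finite $\varepsilon$-net (indeed a quantitative bound on its size), at the cost of invoking the total-boundedness criterion; the paper's version is a self-contained subsequence extraction. Two small points worth making explicit in your write-up: a finite $\varepsilon$-net for a dense subset is a $2\varepsilon$-net for the whole space (this is the step hiding behind ``it suffices to prove that $(A^+,d_\pv V)$ is totally bounded''), and the reduction to finitely many semigroups $T_1,\ldots,T_k$ uses that separation of two words by $T$ depends only on the isomorphism class of $T$. Neither is a gap, just bookkeeping.
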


\begin{proof}
  One checks that every sequence $(u_n)_n$ of elements of \Om AV has a
  converging subsequence, that is, since \Om AV is complete, a Cauchy
  subsequence. Since $A^+$ is dense in \Om AV, one can find a word
  $v_n$ such that $\lim_nd_\pv V(u_n,v_n)=0$. This reduces the statement
  to the case where $u_n$ is a word. Now, since there is a finite number of
  morphisms from $A^+$ into a semigroup of size at most $k$, one
  can extract by diagonalization a subsequence $(u'_n)^{}_n$ of
  $(u_n)_n$ such that for any morphism $\varphi:A^+\to T$ with $|T|\leq k$,
  $(\varphi(u'_n))_{n\geq k}$ is constant. 
\end{proof}
\smallskip Endow $T\in\pv V$ with the discrete topology. The definition
of $d_\pv V$ makes every morphism $\varphi:A^+\to T\in\pv V$
uniformly continuous. Since $A^+$ is dense in \Om AV compact, $\varphi$ has a unique
continuous extension $\hat\varphi:\Om AV\to T$ (which by continuity of the
multiplication is also a morphism). For $L\subseteq\Om AV$, denote by
$\tcl L$ its topological closure in \Om AV.

\begin{lemma}
  \label{lem:cl}
  Let $\varphi:A^+\to T\in\pv V$ and $K=\varphi^{-1}(P)$ for $P\subseteq T$. Then $\tcl K=\hat\varphi^{-1}(P)$.
\end{lemma}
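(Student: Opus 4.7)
The plan is to prove the two inclusions $\tcl K\subseteq\hat\varphi^{-1}(P)$ and $\hat\varphi^{-1}(P)\subseteq\tcl K$ separately, leveraging three facts: the finite semigroup $T$ is endowed with the discrete topology (so every subset, including $P$, is both open and closed), the extension $\hat\varphi:\Om AV\to T$ is continuous by construction, and $A^+$ is dense in $\Om AV$ (moreover $\Om AV$ is metric, so sequential and topological closures coincide).

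For the inclusion $\tcl K\subseteq\hat\varphi^{-1}(P)$, I would first observe that since $\hat\varphi$ extends $\varphi$, one has $K=\varphi^{-1}(P)\subseteq\hat\varphi^{-1}(P)$. The set $P$ being closed in the discrete space $T$ and $\hat\varphi$ being continuous, $\hat\varphi^{-1}(P)$ is closed in $\Om AV$. Hence $\tcl K$, as the smallest closed set containing $K$, is contained in $\hat\varphi^{-1}(P)$.

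For the reverse inclusion $\hat\varphi^{-1}(P)\subseteq\tcl K$, I would take an arbitrary element $x\in\hat\varphi^{-1}(P)$ and exploit density of $A^+$ in $\Om AV$ to pick a sequence $(u_n)_n$ of words with $u_n\to x$. By continuity of $\hat\varphi$, the sequence $\varphi(u_n)=\hat\varphi(u_n)$ converges to $\hat\varphi(x)\in P$. Since $T$ carries the discrete topology, convergence in $T$ is eventual equality, so $\varphi(u_n)=\hat\varphi(x)\in P$ for all sufficiently large $n$. This means $u_n\in K$ eventually, witnessing $x$ as a limit of a sequence in $K$, so $x\in\tcl K$.

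I do not expect any real obstacle: both directions are short and rely only on elementary point-set topology combined with the defining properties of $\hat\varphi$. The mildest subtlety is the appeal to sequences for the second inclusion, which is legitimate because $\Om AV$ is a metric space (and compact, by Proposition~\ref{prop:compact}).
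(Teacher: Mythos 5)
Your proof is correct and follows essentially the same route as the paper: one direction via the clopen preimage $\hat\varphi^{-1}(P)$ containing $K$, the other via density of $A^+$ and the fact that a convergent sequence in the discrete space $T$ is eventually constant. The only cosmetic difference is that the paper first reduces to the case $P=\{p\}$, which your argument shows is unnecessary.
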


\begin{proof}
  Unions commute with inverse images and closures, so it suffices to
  treat the case $P=\{p\}$. Since $\hat\varphi$ is continuous, $\hat\varphi^{-1}(p)$
  is clopen, and it contains $K$, so $\tcl
  K\subseteq\hat\varphi^{-1}(p)$. Conversely, for $u\in\hat\varphi^{-1}(p)$, pick a word
  $u_n$ such that $d_\pv V(u,u_n)<2^{-n}$ (which exists since $A^+$ is dense
  in \Om AV). Then $\varphi(u_n)=p$ for $n>|T|$, hence $u_n\in K$,
  so~$u\in\cl{K}$.
\end{proof}

\noindent For $K\subseteq A^+$, we let $K^c=A^+\setminus K$ and $\cl K^c=\Om AV\setminus\cl K$.
\begin{corollary}
  \label{cor:cl}
  \begin{enumerate}
  \item\label{item:1} If $K$ is \pv V-recognizable, then $\tcl {K^c}=\tcl{K}^c$.
  \item\label{item:2} If $K$ is \pv V-recognizable and $L\subseteq A^+$ is such that
    $\tcl{L}\subseteq\tcl{K}$, then $L\subseteq K$.
  \end{enumerate}
\end{corollary}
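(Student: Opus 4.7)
The plan is to deduce both parts directly from Lemma~\ref{lem:cl}, which identifies the closure of a \pv V-recognizable language with the inverse image under the continuous extension $\hat\varphi$.

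For part~\ref{item:1}, I would start by writing $K=\varphi^{-1}(P)$ for some morphism $\varphi:A^+\to T\in\pv V$ and some $P\subseteq T$. Then $K^c=\varphi^{-1}(T\setminus P)$ is itself \pv V-recognizable through the same morphism, so Lemma~\ref{lem:cl} can be applied to both $K$ and $K^c$. This yields $\cl{K}=\hat\varphi^{-1}(P)$ and $\cl{K^c}=\hat\varphi^{-1}(T\setminus P)$. Since $\hat\varphi$ is a well-defined function from \Om AV into $T$, these two subsets of \Om AV are complementary, which gives $\cl{K^c}=\cl{K}^c$.

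For part~\ref{item:2}, the quickest route is to invoke part~\ref{item:1}. I would argue by contradiction: suppose some $u\in L$ does not lie in~$K$. Then $u\in K^c\subseteq\cl{K^c}$, and part~\ref{item:1} gives $\cl{K^c}=\cl{K}^c$, so $u\notin\cl{K}$. But the hypothesis $L\subseteq A^+\subseteq\cl{L}\subseteq\cl{K}$ forces $u\in\cl{K}$, a contradiction. Hence $L\subseteq K$.

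There is no real obstacle here: both statements are short consequences of Lemma~\ref{lem:cl}. The only subtle point is that in part~\ref{item:1}, the equality $\cl{K^c}=\cl{K}^c$ genuinely uses the \pv V-recognizability of~$K$; it would fail for an arbitrary subset of $A^+$, since topological closure does not commute with complement in general. The hypothesis enters precisely through the fact that $T$ is endowed with the discrete topology, so that $\hat\varphi^{-1}(P)$ is clopen in~\Om AV, and this clopenness is what forces $\cl{K}$ and $\cl{K^c}$ to partition the space.
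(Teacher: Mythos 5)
Your proof is correct and takes essentially the same route as the paper: part~\ref{item:1} is Lemma~\ref{lem:cl} applied to $K$ and $K^c=\varphi^{-1}(T\setminus P)$ through the same morphism, and part~\ref{item:2} rests on the resulting disjointness of $\tcl{K}$ and $\tcl{K^c}$, which the paper phrases in one line as $L\cap K^c\subseteq\tcl{K}\cap\tcl{K^c}=\emptyset$. The only blemish is your chain ``$L\subseteq A^+\subseteq\tcl{L}$'', whose middle inclusion is false as written; what you mean (and what makes the argument go through) is simply $u\in L\subseteq\tcl{L}\subseteq\tcl{K}$.
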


\begin{proof}
  \ref{item:1} Let $\varphi:A^+\to T\in\pv V$, with $K=\varphi^{-1}(P)$. By
  Lemma~\ref{lem:cl}, $\cl{K^c}=\hat\varphi^{-1}(T\setminus P)=\Om AV\setminus\hat\varphi^{-1}(P)=\cl{K}^c$.  For \ref{item:2},
  just write $L\cap K^c\subseteq\tcl{K}\cap\tcl{K^c}=\emptyset$ by~\ref{item:1}.
\end{proof}

\begin{proposition}[follows from {\cite[Thm.~3.6.1]{JAbook}}]
  \label{prop:basis}
  Closures of\/ \pv V-recognizable languages form a basis of the
  topology of\/ \Om AV.
\end{proposition}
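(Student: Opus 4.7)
The plan is to use that $\Om AV$ is a metric space (so the open balls form a basis of its topology), and to show that every open ball around a point $u\in\Om AV$ is already the closure of some \pv V-recognizable language. Combined with the remark that, thanks to Lemma~\ref{lem:cl} and the continuity of $\hat\varphi$ into the discrete space $T$, the sets $\cl K$ for \pv V-recognizable $K$ are clopen (hence legitimate candidates to sit in a basis), this will suffice.

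Concretely, I would first identify the open ball $B(u,2^{-k})$ of radius $2^{-k}$ around $u$ with the set of $v\in\Om AV$ such that $\hat\varphi(u)=\hat\varphi(v)$ for every morphism $\varphi:A^+\to T\in\pv V$ with $|T|\leq k$. This is immediate from the definition of the metric $d_\pv V$ extended to $\Om AV$: $\bar d_\pv V(u,v)<2^{-k}$ exactly means that no semigroup of \pv V of size at most $k$ separates $u$ from $v$ via some morphism. In symbols,
\[
B(u,2^{-k})\;=\;\bigcap_{\varphi:A^+\to T\in\pv V,\,|T|\leq k}\hat\varphi^{-1}\bigl(\hat\varphi(u)\bigr).
\]
This intersection is finite: up to isomorphism there are finitely many semigroups of \pv V of size at most $k$, and each admits only finitely many morphisms from $A^+$ since such a morphism is determined by the images of the letters of $A$.

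Next, I would observe that each factor in this intersection is the closure of a \pv V-recognizable language: setting $P=\{\hat\varphi(u)\}$ and $K_\varphi=\varphi^{-1}(P)$, Lemma~\ref{lem:cl} gives $\hat\varphi^{-1}(P)=\cl{K_\varphi}$. To conclude that the whole finite intersection is itself the closure of a \pv V-recognizable language, I would pass to the product morphism $\Phi=(\varphi_1,\ldots,\varphi_r):A^+\to T_1\times\cdots\times T_r$, which lands in \pv V because pseudovarieties are closed under finite direct products. Then $K:=\bigcap_i K_{\varphi_i}=\Phi^{-1}\bigl(\{(\hat\varphi_1(u),\ldots,\hat\varphi_r(u))\}\bigr)$ is \pv V-recognizable, and a second application of Lemma~\ref{lem:cl} to $\Phi$ yields
\[
\cl K\;=\;\hat\Phi^{-1}\bigl(\{(\hat\varphi_1(u),\ldots,\hat\varphi_r(u))\}\bigr)\;=\;\bigcap_{i}\hat\varphi_i^{-1}\bigl(\hat\varphi_i(u)\bigr)\;=\;B(u,2^{-k}).
\]

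Given any open $U\subseteq\Om AV$ and any $u\in U$, choosing $k$ with $B(u,2^{-k})\subseteq U$ then produces a \pv V-recognizable $K$ with $u\in\cl K\subseteq U$, so closures of \pv V-recognizable languages form a basis. The only mild obstacle is bookkeeping: one must justify that the extended metric on $\Om AV$ really is given by the ``separating morphism into \pv V'' formula (which follows from the fact that morphisms into finite discrete semigroups extend continuously and hence are eventually constant on Cauchy sequences), and that a pseudovariety is closed under finite products so that $\Phi$ has the required codomain. Everything else is a direct consequence of Lemma~\ref{lem:cl} and the fact that $\Om AV$ is a compact metric space.
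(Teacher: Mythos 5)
Your proposal is correct and follows essentially the same route as the paper: the paper likewise takes the product $\alpha$ of all morphisms into members of \pv V of size at most $n$, applies Lemma~\ref{lem:cl} to see that $O_u=\hat\alpha^{-1}(\hat\alpha(u))$ is the closure of the \pv V-recognizable language $\alpha^{-1}(\hat\alpha(u))$, and notes that $O_u$ is an open neighbourhood of $u$ contained in the ball of radius $2^{-n}$. Your only deviation is to write the ball first as a finite intersection of the individual $\hat\varphi^{-1}(\hat\varphi(u))$ before recombining them via the product morphism, which is the same argument presented in two steps instead of one.
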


\begin{proof}
  By Lemma~\ref{lem:cl}, the closure of a \pv V-recognizable language
  is of the form $\hat\varphi^{-1}(P)$ for some continuous morphism
  $\hat\varphi:\Om AV\to T\in\pv V$, hence it is open. Conversely, for $u\in\Om
  AV$, let $O_u=\hat\alpha^{-1}(\hat\alpha(u))$, where $\alpha$ is the product of all
  morphisms $\varphi:A^+\to T\in\pv V$ for $|T|\leq n$. By Lemma~\ref{lem:cl},
  $O_u$ is the closure of the \pv V-recognizable language
  $\alpha^{-1}(\hat\alpha(u))$. By construction, $O_u$ is an open containing $u$, contained
  in the ball of radius $2^{-n}$ centered at~$u$.
\end{proof}

\subsection{\texorpdfstring{Separability of languages by a \pv
    V-recognizable language}{Separability of languages by a
    V-recognizable language}}
\label{sec:separability-pv-v}

Two languages $L_1,L_2\subseteq A^+$ are \emph{\pv V-separable} if there
exists a \pv V-recognizable language $K$ such that $L_1\subseteq K$ and $K\cap
L_2=\emptyset$. Such a language $K$ is a witness, in the given variety of
languages, that~$L_1\cap L_2=\emptyset$, and we say that it \emph{separates}
$L_1$ and $L_2$.

\begin{proposition}
  \label{prop:separation}
  Two languages of~$A^+$ are separated by a \pv V-recognizable
  language if and only if the intersection of their topological
  closures in \Om AV is~empty.
\end{proposition}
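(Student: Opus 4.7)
The plan is to prove each implication separately, relying on the material developed in Section~\ref{sec:preliminary-material}, in particular the compactness of $\Om AV$ (Proposition~\ref{prop:compact}), the fact that closures of \pv V-recognizable languages form a basis of the topology (Proposition~\ref{prop:basis}), and Corollary~\ref{cor:cl}.

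For the easy (``only if'') direction, suppose $L_1$ and $L_2$ are separated by a \pv V-recognizable language $K$, so that $L_1\subseteq K$ and $L_2\subseteq K^c$. Taking closures gives $\cl{L_1}\subseteq\cl K$ and $\cl{L_2}\subseteq\cl{K^c}$. By Corollary~\ref{cor:cl}\ref{item:1}, $\cl{K^c}=\cl K^c$, whence $\cl{L_1}\cap\cl{L_2}\subseteq\cl K\cap\cl K^c=\emptyset$. This part is essentially immediate.

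For the harder (``if'') direction, suppose $\cl{L_1}\cap\cl{L_2}=\emptyset$. The idea is to use compactness together with a basis of clopen sets to build a \pv V-recognizable separator. By Proposition~\ref{prop:basis}, for every $u\in\cl{L_1}$, since $u\notin\cl{L_2}$ and $\cl{L_2}$ is closed, one can choose a \pv V-recognizable language $K_u\subseteq A^+$ such that $u\in\cl{K_u}$ and $\cl{K_u}\cap\cl{L_2}=\emptyset$. The family $\{\cl{K_u}\}_{u\in\cl{L_1}}$ is then an open cover of the closed set $\cl{L_1}$, which is compact since it is a closed subset of the compact space $\Om AV$. Extract a finite subcover $\cl{K_{u_1}},\dots,\cl{K_{u_n}}$ and set $K=K_{u_1}\cup\cdots\cup K_{u_n}$. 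Because a variety of languages is closed under finite union, $K$ is \pv V-recognizable. Since the closure of a finite union equals the union of closures, $\cl K=\bigcup_i\cl{K_{u_i}}\supseteq\cl{L_1}$, and also $\cl K\cap\cl{L_2}=\emptyset$ by construction. Finally, Corollary~\ref{cor:cl}\ref{item:2} applied to $L_1\subseteq\cl{L_1}\subseteq\cl K$ yields $L_1\subseteq K$, and $K\cap L_2\subseteq\cl K\cap\cl{L_2}=\emptyset$ gives the disjointness from $L_2$.

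The only non-routine step is the ``if'' direction, and its main ingredient is the compactness-plus-basis argument; the subtlety to watch for is that one must invoke Corollary~\ref{cor:cl}\ref{item:2} in order to upgrade the containment $\cl{L_1}\subseteq\cl K$ (which is what the topological argument naturally delivers) to the genuine inclusion $L_1\subseteq K$ of subsets of $A^+$. Everything else is standard manipulation of closures and of the closure properties of varieties.
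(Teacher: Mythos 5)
Your proof is correct and follows essentially the same route as the paper: the forward direction via Corollary~\ref{cor:cl} and $\cl{K}\cap\cl{K^c}=\emptyset$, and the converse via the basis of clopen sets (Proposition~\ref{prop:basis}) plus compactness of $\cl{L_1}$, finishing with Corollary~\ref{cor:cl} to pass from $\cl{L_1}\subseteq\cl{K}$ to $L_1\subseteq K$. No gaps; this matches the paper's argument step for step.
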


\begin{proof}
  Let $L_1,L_2\subseteq A^+$, and let $K$ be \pv V-recognizable such that
  $L_1\subseteq K$ and $K\cap L_2=\emptyset$. Then
  $\tcl{L_1}\cap\cl{L_2}\subseteq\tcl{K}\cap\tcl{K^c}=\emptyset$ by Corollary~\ref{cor:cl}.

  Conversely, if $\tcl{L_1}\cap\tcl{L_2}=\emptyset$, then any $u\in\cl{L_1}$
  belongs to the open set $\cl{L_2}^c$, so by
  Proposition~\ref{prop:basis}, there exists some \pv V-recognizable
  language $K_u$ whose closure $O_u$ contains $u$, and is such that
  $O_u\cap\cl{L_2}=\emptyset$. Therefore $\cl{L_1}\subseteq\bigcup_{u\in\cl{L_1}}O_u$. Since $\cl{L_1}$
 is a closed set in the compact space \Om AV (Prop.~\ref{prop:compact}), it is itself
  compact and has a finite cover~$O_{u_1}\cup\cdots\cup O_{u_n}$. Then $K=K_{u_1}\cup\cdots\cup K_{u_n}$ is \pv
  V-recognizable. We have $\cl{L_1}\subseteq\cl{K}$, so by
  Corollary~\ref{cor:cl},~${L_1}\subseteq{K}$.  
  Also, $K \subseteq O_{u_1}\cup\cdots\cup O_{u_n} \subseteq \cl{L_2}^c$. 
\end{proof}

\section{Proof of Lemma~\ref{lem:sat red}}
\setcounter{lemma}{0} 

\begin{lemma}
  The following problem is NP-complete. 
  \begin{tabbing}
    \textsf{Input:} \hspace{4mm} \= An alphabet $A = \{a_1, a_2, \ldots, a_n\}$ and two DFA's $\mathcal{A}_1, \mathcal{A}_2$ over $A$.\\
    
    \textsf{Question:} \> Do there exist $u \in L(\mathcal{A}_1)$ and $v \in L(\mathcal{A}_2)$ such that $\content{u} = \content{v}$?
  \end{tabbing}
\end{lemma}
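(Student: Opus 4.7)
The plan is to show both that the problem lies in NP and that it is NP-hard.

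For the NP upper bound, the key observation is that a positive instance admits a polynomial-size certificate. Suppose some $u \in L(\mathcal{A}_i)$ satisfies $\content{u} = B$. Along the accepting run of $u$ each letter $a \in B$ labels at least one transition; selecting one such transition per $a \in B$ splits the run into $|B|+1$ sub-paths lying entirely in $\mathcal{A}_i\restriction_B$, and replacing each sub-path by a shortest $B$-path between the same endpoints yields another word of $L(\mathcal{A}_i) \cap \exactcontent{B}$ of length at most $(|B|+1)\,|Q_i|$. Hence an NP machine may guess $B \subseteq A$ and polynomial-length words $u,v$, then check in polynomial time that $u \in L(\mathcal{A}_1)$, $v \in L(\mathcal{A}_2)$, and $\content{u} = \content{v} = B$.

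For NP-hardness I will reduce from 3-\textsc{SAT}. Given a formula $\varphi$ over variables $x_1,\ldots,x_n$ with clauses $c_1,\ldots,c_m$, take the alphabet to be the literal set $A = \{x_1, \overline{x_1}, \ldots, x_n, \overline{x_n}\}$. Let $\mathcal{A}_1$ be the $O(n)$-state DFA recognizing the finite language
\[
L(\mathcal{A}_1) \,=\, \{x_1, \overline{x_1}\}\cdot\{x_2, \overline{x_2}\}\cdots\{x_n, \overline{x_n}\},
\]
so that for any $u \in L(\mathcal{A}_1)$, $\content{u}$ contains exactly one literal per variable, i.e.\ encodes a truth assignment. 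Let $\mathcal{A}_2$ be the $O(m)$-state DFA recognizing
\[
L(\mathcal{A}_2) \,=\, \mathrm{lit}(c_1)\cdot\mathrm{lit}(c_2)\cdots\mathrm{lit}(c_m)\cdot A^*,
\]
where $\mathrm{lit}(c_j) \subseteq A$ is the set of literals appearing in clause $c_j$, so that every $v \in L(\mathcal{A}_2)$ has content containing at least one literal of each clause.

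Correctness of the reduction is then almost immediate. If $\varphi$ is satisfied by the assignment encoded by $B \subseteq A$, reading the literals of $B$ in variable order gives $u \in L(\mathcal{A}_1)$ with $\content{u} = B$; choosing $\ell_j \in B \cap \mathrm{lit}(c_j)$ (nonempty because $B$ satisfies $c_j$) and then concatenating enough additional letters from $B$ yields $v \in L(\mathcal{A}_2)$ with $\content{v} = B$. Conversely, any pair $u \in L(\mathcal{A}_1)$, $v \in L(\mathcal{A}_2)$ with $\content{u} = \content{v} = B$ forces $B$ to be simultaneously a valid truth assignment (from the shape of $u$) and to satisfy every clause (from the prefix structure of $v$). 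I do not expect any serious obstacle beyond carefully writing down the polynomial-length bound used in the NP certificate; the two DFAs of the reduction are transparent to describe and have linear size in $n+m$.
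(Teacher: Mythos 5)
Your proof is correct and the NP-hardness reduction is essentially the one in the paper: the same literal alphabet, the same chain DFA $\mathcal{A}_1$ forcing exactly one literal per variable, and an $\mathcal{A}_2$ whose serial prefix selects one literal per clause (you append $A^*$ where the paper appends a copy of $\mathcal{A}_1$; either choice works). You additionally prove the NP upper bound via a polynomial-length witness obtained by shortcutting the accepting run between one marked occurrence of each letter --- a step the paper's proof omits entirely --- and that argument is sound.
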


\begin{proof}
  We will give a reduction from $3$-SAT to this problem. 

  Let $\varphi$ be a $3$-SAT formula over the variables $\{x_1, \ldots, x_n\}$. Define $A := \{x_1, \ldots x_n, \neg x_1,\ldots, \neg x_n\}$. 
  Let $\mathcal{A}_1$ be 

  \begin{figure}[H]
    \begin{center}
      \unitlength=4pt
        \scalebox{.9}{
      \begin{gpicture}(60, 6)(0,-3)
        \gasset{Nw=3,Nh=3,Nmr=2.5,curvedepth=0}
        \thinlines
        \node[Nmarks=i,iangle=180](A1)(0,0){}
        
        \node(A2)(15,0){}
        \node(A3)(30,0){}
        \node(A4)(45,0){}

        \node[Nmarks=f,iangle=180](A5)(60,0){}
        \gasset{curvedepth=4}
        \drawedge(A1,A2){$x_1$}
        \drawedge(A2,A3){$x_2$}
        \put(35,0){$\ldots$}
        \drawedge(A4,A5){$x_n$}
        \gasset{curvedepth=-4}
        \drawedge[ELside=r](A1,A2){$\neg x_1$}
        \drawedge[ELside=r](A2,A3){$\neg x_2$}
        \drawedge[ELside=r](A4,A5){$\neg x_n$}
      \end{gpicture}}
    \end{center}
  \end{figure}
  and let $\mathcal{A}_2$ be the serial automaton in which for every disjunct $d$ in the $i$-th clause of $\varphi$, there is an arrow from state $i$ to $i+1$ labeled $d$, concatenated with a copy of $\mathcal{A}_1$. For example, if $\varphi = (x_1 \vee x_3 \vee \neg x_4) \wedge \ldots \wedge (x_4 \vee \neg x_5 \vee x_2)$, the automaton $\mathcal{A}_2$ is

  \begin{figure}[H]
    \begin{center}
      \unitlength=4pt
        \scalebox{.8}{
      \begin{gpicture}(100, 8)(0,-3)
        \gasset{Nw=3,Nh=3,Nmr=2.5,curvedepth=0}
        \thinlines
        \node[Nmarks=i,iangle=180](A1)(0,0){}
        
        \node(A2)(15,0){}
        \node(A3)(30,0){}
        \node(A4)(45,0){}
        \node(A5)(60,0){}
        \node(A6)(75,0){}
        \node[Nmarks=f,iangle=180](A7)(90,0){}
        \put(20,0){$\cdots$}
        \put(65,0){$\cdots$}
        
        \gasset{curvedepth=0}
        \drawedge(A1,A2){$x_3$}
        \drawedge(A3,A4){$\neg x_5$}

        \gasset{curvedepth=6}
        \drawedge(A1,A2){$x_1$}  
        \drawedge(A3,A4){$x_4$}
        \gasset{curvedepth=4}
        \drawedge(A4,A5){$x_1$}
        \drawedge(A6,A7){$x_n$}

        \gasset{curvedepth=-6}
        \drawedge[ELside=r](A1,A2){$\neg x_4$}
        \drawedge[ELside=r](A3,A4){$x_2$}
        \gasset{curvedepth=-4}
        \drawedge[ELside=r](A4,A5){$\neg x_1$}
        \drawedge[ELside=r](A6,A7){$\neg x_n$}      
      \end{gpicture}}
    \end{center}
  \end{figure}
  
  We will show that $\varphi$ is satisfiable if and only if the question mentioned above is answered positively for these $\mathcal{A}_1$ and $\mathcal{A}_2$.

  Suppose $\varphi$ is satisfiable. Then there is a valuation $v : \{x_1, \ldots x_n\} \rightarrow \{0,1\}$ such that $\overline{v}(\varphi) = 1$. Define $u := y_1 \cdots y_n$, with $y_i = x_i$ if $v(x_i)=1$ and $y_i = \neg x_i$ if $v(x_i) = 0$. In each of the $k$ clauses of $\varphi$, there is at least one disjunct $d$ for which $v(d) = 1$.  
  Define $v := w_1 \cdots w_k u$, where $w_i$ is any one of the disjuncts in the $i$-th clause that is evaluated to $1$. 
  Now, $u \in L(\mathcal{A}_1), v \in L(\mathcal{A}_2)$, and by soundness of the valuation function, $\content{u} = \content{v}$. 

  On the other hand, suppose that for these $\mathcal{A}_1$ and $\mathcal{A}_2$, there are $u \in L(\mathcal{A}_1), v \in L(\mathcal{A}_2)$ with $\content{u} = \content{v}$.
  By construction of $\mathcal{A}_1$, for every $i$, $\content{u}$ contains either $x_i$ or $\neg x_i$. 
  By construction of $\mathcal{A}_2$ and since $\content{u} = \content{v}$, we have that $v = wu$ and that $\content{w} \subseteq \content{u}$. 
  Define the valuation 
  \[
  \begin{array}{rcccl}
    v : & \{x_1, \ldots x_n\} &\rightarrow& \{0,1\} & \\
    & x_i & \mapsto & 1& \text{if } x_i \in \content{u} \\
    & x_i & \mapsto & 0& \text{else}
  \end{array}
  \]
  Now $v$ sends all variables occurring in $w$ to $1$, which gives $\overline{v}(\varphi) =1$.

\end{proof}

\end{document}